\newtheorem{theorem}{Theorem}[section]
\newtheorem{corollary}[theorem]{Corollary}
\newtheorem{prop}[theorem]{Proposition}
\newtheorem{lemma}[theorem]{Lemma}
\newcommand{\id}{\mathrm{id}}
\newcommand{\Hom}{\mathrm{Hom}}
\newcommand{\End}{\mathrm{End}}
\newcommand{\naturalNumbers}{\mathbb{N}}
\newcommand{\complexNumbers}{\mathbb{C}}
\newcommand{\CJ}{Choi-Jamio\l{}kowski }
\title{On the Extremality of the Tensor Product of Quantum Channels}
\author{James Miller Simeão Toledo da Silva \footnote{jmsarxiv@outlook.com}}
\affil{University of São Paulo, IME, Brazil}
\begin{document}
	
\maketitle
\begin{abstract}
	Completely positive and trace preserving (CPT) maps are important for Quantum Information Theory, because they describe a broad class of of transformations of quantum states. There are also two other related classes of maps, the unital completely positive (UCP) maps and the unital completely positive and trace preserving (UCPT) maps. For these three classes, the maps from a finite dimensional Hilbert space $X$ to another one $Y$ is a compact convex set and, as such, it is the convex hull of its extreme points. The extreme points of these convex sets are yet not well understood. In this article we investigate the preservation of extremality under the tensor product. We prove that extremality is preserved for CPT or UCP maps, but for UCPT it is not always preserved.
\end{abstract}

\section{Introduction}

Quantum Information Theory is a theory of storage and processing of information in the form of quantum states. Completely positive and trace preserving maps \cite{nielsen_chuang_2010} are used to describe open quantum systems, and is also a broad class of transformations of quantum states. Being a central object of Quantum Information Theory, understanding its mathematical structure is important for the development of the subject. We'll consider not only CPT maps, but also related classes of maps, the unital completely positive (UCP) maps and the unital completely positive and trace preserving (UCPT) maps.

CPT maps from a finite dimensional Hilbert space $X$ to another $Y$ forms a compact convex set $CPT(X,Y)$. The same is true for $UCP$ or $UCPT$ maps. Every compact convex set is the convex hull of its extreme points. There are characterizations of these set of extreme points due to Choi \cite{CHOI1975285} (see theorem \ref{theorem: characterization of extreme cpt maps}) and Laudau and Streater\cite{LANDAU1993107} (see theorem \ref{theorem: characterization of extreme points for ucpt}), but the extreme points are yet not well understood. An important question is the stability of extremality under some operations. These include the operations of composition and tensor product. It is already known that the composition of two extreme CPT maps need not be extreme \cite{stackExchange}. But, apparently, it wasn't yet answered the question about the extremality of the tensor product of extreme CPT, UCP or UCPT maps (\cite{nlab}). In this article we prove that extremality is preserved by the tensor product in the cases of CPT or UCP maps. For UCPT maps extremality is preserved if one of the Hilbert spaces has dimension 2, but it may not be the case for higher dimensions, as counterexamples shows.

For this particular article the maximal Choi ranks of UCPT maps will be important, as we'll see in theorem \ref{theorem: tensor product of ucpt need not be extreme}. In \cite{ohno} it was investigated the maximal Choi ranks of UCPT maps, but they were only computed up to dimension 4. For higher dimensions, there was given a lower bound that the maximal Choi rank is at least the dimension of the space. Also, for dimensions 3 and 4 were given single examples of such extreme UCPT maps of maximal rank. In \cite{Haagerup2021} was constructed families of examples of high rank extreme maps, but the lower bounds on the maximal ranks were not improved. A better lower bound on the maximal ranks of extreme UCPT maps may be enough to decide for which dimensions the tensor product preserves extremality for UCPT maps, but this wasn't done yet, as far as I know.

The article is organized as follows. In section 2 we review some concepts of Linear Algebra, Convex Sets and Quantum Information Theory. Also, we review the characterization of linear independence as the invertibility of the Gram matrix, which will be important in the cases of CPT and UCP maps. Then, in section 3, we apply these results to prove that the tensor product of extreme CPT or UCP maps is also extreme. Also, we show that if two extreme UCPT maps have high enough Choi ranks, their tensor product cannot be an extreme point of UCPT maps. Then we use examples of \cite{ohno} of high enough Choi ranks so that their tensor product is not extreme.

\section{Preliminaries}

\subsection{Linear Algebra}

\subsubsection{Hilbert spaces}

As usual in Quantum Mechanics, the inner product of a Hilbert space is antilinear in the first argument and linear in the second. Also, we restrict to complex finite dimensional vector spaces. Therefore, such a Hilbert space is a finite dimensional complex vector space $X$, together with an inner product $\langle \; , \; \rangle \colon X \times X \to \complexNumbers$. The inner product is a sesquilinear form, which means that the following properties are valid:
\begin{itemize}
	\item $\forall x,x',x''\in X$, $\langle x,x'+x''\rangle = \langle x,x'\rangle+ \langle x,x''\rangle$;
	\item $\forall x,x' \in X$, $\forall \lambda \in \complexNumbers$, $\langle x,\lambda x'\rangle = \lambda \langle x,x'\rangle$;
	\item $\forall x,x' \in X$, $\langle x,x'\rangle = \overline{\langle x',x\rangle}$.
\end{itemize} 

We denote by $\Hom(X,Y)$ the set of linear maps $A \colon X \to Y$. We define $\End(X) = \Hom(X,X)$, the set of linear endomorphisms of $X$. Also, $\Hom(X,Y)$ have the Hilbert-Schmidt inner product, given by $\langle A, B \rangle = tr(A^\dag B)$, for $A,B \in \Hom(X,Y)$.

A construction that will be important in this article is the tensor product. The tensor product of vector spaces $X$ and $Y$ is a $\complexNumbers$-bilinear map $\otimes \colon X\times Y \to X\otimes Y$ that satisfies the following universal property: if $B \colon X\times Y \to Z$ is $\complexNumbers$-bilinear, then there exists a unique $\complexNumbers$-linear map $u \colon X\otimes Y \to Z$ such that $b = u\otimes$.
\begin{center}
\begin{tikzcd}
X \times Y \arrow[r,"\otimes"] \arrow[rd,"B"] & X \otimes Y \arrow[d,dashed,"u"]\\
 & Z
\end{tikzcd}
\end{center}
Given a basis $x_1$,...,$x_n$ for $X$ and $y_1$,...,$y_m$ for $Y$, $(x_i \otimes y_j)_{\substack{i=1,...,n \\ j=1,...,m}}$ is a basis for $X \otimes Y$. We defined the tensor product of vector spaces, but there is also a tensor product of linear maps. Let $A \colon X \to X'$ and $B \colon Y \to Y'$ be $\complexNumbers$-linear maps. Their tensor product $A \otimes B \colon X \otimes Y \to X' \otimes Y'$ is the unique linear map such that the next diagram commutes:
\begin{center}
\begin{tikzcd}
X\times Y \arrow[d,"A\times B"] \arrow[r,"\otimes"] & X \otimes Y \arrow[d,dashed,"A\otimes B"]\\
X'\times Y' \arrow[r,"\otimes"] & X' \otimes Y'
\end{tikzcd}
\end{center}
That is, $A\otimes B$ is computed on elementary tensors as $(A\otimes B)(x \otimes y) = A(x) \otimes B(y)$, for any $x \in X$ and $y \in Y$. Given basis $x_1$,...,$x_n$ for $X$, $x'_1$,...,$x'_{n'}$ for $X'$, $y_1$,...,$y_m$ for $Y$ and $y'_1$,...,$y'_{m'}$ for $Y'$, we have the basis $(x_i \otimes y_j)_{\substack{i=1,...,n \\ j=1,...,m}}$ for $X \otimes Y$ and $(x'_{i'} \otimes y'_{j'})_{\substack{i'=1,...,n' \\ j'=1,...,m'}}$ for $X' \otimes Y'$. We can express the matrix $[A\otimes B]$ of $A\otimes B$ in these basis. Computing $A\otimes B$ on $x_i \otimes y_j$, we have $(A\otimes B)(x_i \otimes y_j) = A(x_i) \otimes B(y_j) = (\sum_{i'} [A]_{i',i} x'_{i'}) \otimes (\sum_{j'} [B]_{j',j} y'_{j'})$. By bilinearity of $\otimes$, we get $(A\otimes B)(x_i \otimes y_j) = \sum_{i',j'} [A]_{i',i} [B]_{j',j} x'_{i'} \otimes y'_{j'}$. Therefore, $[A\otimes B]$ has elements $[A\otimes B]_{(i',j'),(i,j)} = [A]_{i',i} [B]_{j',j}$. Since $[A \otimes B]$ has double indices $(i,j)$, to interpret it as a matrix we have to choose an order. This is done by taking the lexicographic order, that is, $(i,j) < (i',j') \iff i<i'$ or $i=i'$ and $j<j'$. With this order, the matrix $[A\otimes B]$ is called the Kronecker product of $[A]$ and $[B]$, which we denote as $[A] \otimes  [B]$.

If $X$ and $Y$ are Hilbert spaces, then their tensor product $X \otimes Y$ is also a Hilbert space, with inner product given by $\langle x \otimes y, x' \otimes y' \rangle = \langle x,x' \rangle \langle y,y' \rangle$, for $x,x' \in X$ and $y,y' \in Y$. We have a similar property for the Hilbert-Schmidt inner product on $\End(X\otimes Y)$. Any element of $\End(X\otimes Y)$ is a linear combination of linear maps of the form $A \otimes B$, for $A \in \End(X)$ and $B \in \End(Y)$. Therefore, they are sufficient to characterize the inner product on $\End(X\otimes Y)$. Given $A,A' \in \End(X)$ and $B,B' \in \End(Y)$, we have $\langle A \otimes B, A' \otimes B' \rangle = tr((A \otimes B)^\dag (A' \otimes B')) = tr((A^\dag \otimes B^\dag)(A' \otimes B')) = tr(A^\dag A' \otimes B^\dag B')$. The trace is multiplicative, that is, we have the property $tr(f \otimes g) = tr(f)tr(g)$. Using this property, we get $\langle A \otimes B, A' \otimes B' \rangle = tr(A^\dag A') tr(B^\dag B') = \langle A,A' \rangle \langle B,B' \rangle$. Therefore 
\begin{equation}
\langle A \otimes B, A' \otimes B' \rangle = \langle A,A' \rangle \langle B,B' \rangle,
\label{equation: HS multiplicativity}
\end{equation}
for any $A,A' \in \End(X)$ and $B,B' \in \End(Y)$.

\subsubsection{Gram matrices}

As we'll see in theorem \ref{theorem: characterization of extreme cpt maps}, it characterizes extremality of CPT maps in terms of linear independence. It will be useful to restate linear independence as the invertibility of the Gram matrix. In this section, we recall some results regarding Gram matrices. A proof of lemma \ref{lemma: gram matrix} may be found in theorem 7.2.10 of \cite{horn_johnson_2012}.

\begin{lemma}
	Let $(V,\langle \; , \; \rangle)$ be an inner product space and $v_1$,..., $v_n$ be vectors of $V$. Let
	\begin{equation}
		G = \begin{pmatrix}
			\langle v_1, v_1 \rangle & \langle v_1, v_2 \rangle & \cdots & \langle v_1, v_n \rangle \\
			\langle v_2, v_1 \rangle & \langle v_2, v_2 \rangle & \cdots & \langle v_2, v_n \rangle \\
			\vdots & \vdots & \ddots & \vdots\\
			\langle v_n, v_1 \rangle & \langle v_n, v_2 \rangle & \cdots & \langle v_n, v_n \rangle
		\end{pmatrix}
	\end{equation}
	be the Gram matrix of $v_1$,...,$v_n$. Then $v_1$,..., $v_n$ are linearly independent if, and only if, its Gram matrix is invertible.
	\label{lemma: gram matrix}
\end{lemma}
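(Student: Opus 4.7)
The plan is to identify the kernel of $G$ with the space of linear dependence relations among $v_1,\ldots,v_n$. For a column vector $c=(c_1,\ldots,c_n)^T\in\complexNumbers^n$, I would set $w=\sum_i c_i v_i$ and compute the $j$-th entry of $Gc$: using linearity of the inner product in the second argument, $(Gc)_j = \sum_i c_i \langle v_j,v_i\rangle = \langle v_j,w\rangle$. Hence $Gc=0$ is exactly the condition that $w$ is orthogonal to every $v_j$.

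The next step is to upgrade this orthogonality to $w=0$. I would expand $\langle w,w\rangle = \langle \sum_j c_j v_j, w\rangle = \sum_j \overline{c_j}\langle v_j,w\rangle$ using antilinearity in the first argument; since each $\langle v_j,w\rangle$ vanishes, positive-definiteness of the inner product forces $w=0$, i.e.\ $\sum_i c_i v_i = 0$. Conversely, any dependence $\sum_i c_i v_i = 0$ immediately gives $\langle v_j,\sum_i c_i v_i\rangle = (Gc)_j = 0$ for every $j$, so $Gc = 0$.

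Combining the two directions, I would conclude that $\ker G = \{c : \sum_i c_i v_i = 0\}$. Therefore $G$ is invertible (equivalently, has trivial kernel) if and only if the only linear relation among the $v_i$ is the trivial one, which is precisely linear independence. The argument is elementary and I expect no real obstacle; the only point requiring care is tracking the convention that the inner product is antilinear in the first argument and linear in the second, so that the Gram matrix acts on coefficient vectors on the correct side.
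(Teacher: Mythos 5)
Your proof is correct. You share the paper's opening computation --- that the $j$-th entry of $Gc$ equals $\langle v_j, \sum_i c_i v_i\rangle$ --- but from there you finish differently and more directly. You identify $\ker G$ with the space of linear dependence relations: $Gc=0$ means $w=\sum_i c_i v_i$ is orthogonal to every $v_j$, hence $\langle w,w\rangle=\sum_j \overline{c_j}\langle v_j,w\rangle=0$ and $w=0$ by positive-definiteness; the converse is immediate, so $G$ (a square matrix) is invertible iff its kernel is trivial iff the $v_i$ admit no nontrivial relation. The paper instead passes to the quadratic form $a^\dag G a = \|\sum_i a_i v_i\|^2$, observes that $G$ is a positive matrix, diagonalizes it, and argues that $a^\dag G a=0$ has only the trivial solution precisely when every eigenvalue is nonzero. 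Your route avoids the spectral theorem entirely and is the more economical argument; the paper's route has the side benefit of recording explicitly that $G$ is positive semidefinite (a fact it reuses implicitly when reasoning about Kronecker products of Gram matrices later), but that observation is not needed for the equivalence itself. You are also right that the only delicate point is the sesquilinearity convention, which you handle correctly.
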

\begin{proof}
	First, given vectors $a = (a_1,...,a_n)$ and $b = (b_1,...,b_n)$, which we'll represent as column vectors, lets compute $a^\dag G b$. We have
	\begin{align*}
		Gb &= \begin{pmatrix}
			\langle v_1, v_1 \rangle & \langle v_1, v_2 \rangle & \cdots & \langle v_1, v_n \rangle \\
			\langle v_2, v_1 \rangle & \langle v_2, v_2 \rangle & \cdots & \langle v_2, v_n \rangle \\
			\vdots & \vdots & \ddots & \vdots\\
			\langle v_n, v_1 \rangle & \langle v_n, v_2 \rangle & \cdots & \langle v_n, v_n \rangle
		\end{pmatrix}	\begin{pmatrix}
			b_1\\
			b_2\\
			\vdots\\
			b_n
		\end{pmatrix}\\
		&= 	\begin{pmatrix}
			\sum_{j=1}^n \langle v_1, v_j \rangle b_j\\
			\sum_{j=1}^n \langle v_2, v_j \rangle b_j\\
			\vdots\\
			\sum_{j=1}^n \langle v_n, v_j \rangle b_j\\
		\end{pmatrix}\\
		&= 	\begin{pmatrix}
			\langle v_1,\sum_{j=1}^n b_j v_j \rangle\\
			\langle v_2,\sum_{j=1}^n b_j v_j \rangle\\
			\vdots\\
			\langle v_n,\sum_{j=1}^n b_j v_j \rangle\\
		\end{pmatrix}.
	\end{align*}
	Therefore $a^\dag G b = \sum_{i=1}^n \overline{a_i} \langle v_i,\sum_{j=1}^n b_j v_j \rangle = \langle \sum_{i=1}^n a_i v_i,\sum_{j=1}^n b_j v_j \rangle$. Notice that the inner product is antilinear in the first argument, so $a_i$ passes to inside the inner product without the conjugation. The last result implies that $G$ is a positive matrix, since $a^\dag G a = ||\sum_{i=1}^n a_i v_i ||^2$, which is $\geq 0$. Therefore, $G$ is diagonalizable and its eigenvalues are real and non-negative.
	
	Now we prove the equivalence of $v_1$,...,$v_n$ being linearly independent and $G$ being invertible. By the equation $a^\dag G a= ||\sum_{i=1}^n a_i v_i||^2$, the vectors $v_1$,..., $v_n$ are linearly independent if, and only if, $a=0$ is the unique solution to the equation $a^\dag G a=0$. Also notice that, since $G$ is diagonalizable, it is invertible if, and only if, $0$ isn't an eigenvalue of $G$. Let $c_1$,..., $c_n$ be orthonormal eigenvectors of $G$ and $\lambda_i$ the eigenvalue of $c_i$. Then $a$ is written as $a = \sum_{i=1}^n \alpha_i c_i$, for some $\alpha_i \in \complexNumbers$. Then we have $a^\dag G a = a^\dag G \sum_{i=1}^n \alpha_i c_i = \sum_{i=1}^n \alpha_i a^\dag G  c_i = \sum_{i=1}^n \alpha_i a^\dag \lambda_i  c_i = \sum_{i=1}^n \lambda_i \alpha_i a^\dag c_i = \sum_{i=1}^n \lambda_i \alpha_i \overline{\alpha_i} = \sum_{i=1}^n \lambda_i |\alpha_i|^2$. If each $\lambda_i$ isn't 0, then for any $a\neq 0$ we have $\sum_{i=1}^n \lambda_i |\alpha_i|^2 > 0$. Therefore, when each $\lambda_i$ isn't 0, $a^\dag G a$ can only be 0 if $a=0$. If for some $j$ we have $\lambda_j =0$, then we can pick $\alpha_j=1$ and $\alpha_i=0$ for $i\neq j$, which gives a vector $a\neq 0$ with $a^\dag G a = 0$. Therefore,  $a=0$ is the unique solution to the equation $a^\dag G a=0$ precisely when each $\lambda_i$ isn't 0, that is, when $G$ is invertible. This concludes that $G$ is invertible if, and only if, $v_1$,..., $v_n$ are linearly independent.
\end{proof}

Gram matrices of tensor products of vectors will be important. For this reason, we prove the following lemmas:

\begin{lemma}
Let $V,W$ be vectors spaces. Let also $(v_i)_{i=1,...,n}$ be vectors of $V$ and $(w_j)_{j=1,...,m}$ be vectors of $W$. Let $G$ be the Gram matrix of $(v_i)_{i=1,...,n}$, $G'$ be the Gram matrix of $(w_j)_{j=1,...,m}$ and $G''$ be the Gram matrix of $(v_i \otimes w_j)_{\substack{i=1,...,n \\ j=1,...,m}}$. We regard $(v_i \otimes w_j)_{\substack{i=1,...,n \\ j=1,...,m}}$ as a sequence ordered in the lexicographic order, that is, $(i,j) < (i',j') \iff i<i'$ or $i=i'$ and $j<j'$. Then $G''$ is the Kronecker product of $G$ and $G'$, that is, $G''= G \otimes G'$. Its matrix elements are, therefore, $G''_{(i,j),(i',j')} = G_{i,i'} G'_{j,j'}$.
\label{lemma: Gram tensor}
\end{lemma}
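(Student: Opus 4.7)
The plan is to compute the entries of $G''$ directly from the definition and show that they coincide with the entries of $G \otimes G'$ as computed in the earlier Kronecker-product discussion. Since both matrices are indexed by pairs $(i,j)$ in the lexicographic order, the claim reduces to an entrywise equality.

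First, I would write out $G''_{(i,j),(i',j')} = \langle v_i \otimes w_j, v_{i'} \otimes w_{j'} \rangle$ straight from the definition of the Gram matrix. Next, I would invoke the multiplicativity of the inner product on a tensor product of Hilbert spaces, namely $\langle x \otimes y, x' \otimes y' \rangle = \langle x, x' \rangle \langle y, y' \rangle$, which was recorded in the preliminaries. Applying it to $x = v_i$, $x' = v_{i'}$, $y = w_j$, $y' = w_{j'}$ gives
\begin{equation*}
G''_{(i,j),(i',j')} = \langle v_i, v_{i'} \rangle \langle w_j, w_{j'} \rangle = G_{i,i'}\, G'_{j,j'}.
\end{equation*}

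Finally, I would compare this with the Kronecker-product formula derived earlier in the paper, which states that for linear maps $A, B$ one has $[A \otimes B]_{(i',j'),(i,j)} = [A]_{i',i}[B]_{j',j}$, and which applies equally to any pair of matrices regarded as arrays indexed by the same lexicographic order. Thus the entries $G_{i,i'} G'_{j,j'}$ are exactly the entries of $G \otimes G'$, yielding $G'' = G \otimes G'$.

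I do not expect any genuine obstacle: the content of the lemma is essentially bookkeeping, and the only subtle point is ensuring that the ordering chosen for the sequence $(v_i \otimes w_j)$ matches the convention used to define the Kronecker product. Both are lexicographic, as stated explicitly in the hypothesis, so no further care is needed.
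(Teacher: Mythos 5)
Your proposal is correct and follows essentially the same route as the paper: compute $G''_{(i,j),(i',j')}$ from the definition, apply the multiplicativity $\langle v_i \otimes w_j, v_{i'} \otimes w_{j'}\rangle = \langle v_i, v_{i'}\rangle\langle w_j, w_{j'}\rangle$ of the tensor-product inner product, and match the result entrywise with the lexicographically ordered Kronecker product. Nothing is missing.
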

\begin{proof}
The matrix elements of $G$, $G'$ and $G''$ are, by definition,
\begin{equation}
G_{i,i'} = \langle v_i , v_{i'} \rangle,
\label{eq: G elemento de matriz}
\end{equation}
\begin{equation}
G'_{j,j'} = \langle w_j , w_{j'} \rangle, 
\label{eq: G' elemento de matriz}
\end{equation}
\begin{equation}
G''_{(i,j),(i',j')} = \langle v_i \otimes w_j , v_{i'} \otimes w_{j'} \rangle.
\end{equation}
By definition of the inner product of $V \otimes W$, we have $G''_{(i,j),(i',j')} $ $=$ $ \langle v_i \otimes w_j , v_{i'} \otimes w_{j'} \rangle $ $=$ $ \langle v_i, v_{i'}\rangle \langle w_j , w_{j'} \rangle$. By (\ref{eq: G elemento de matriz}) and (\ref{eq: G' elemento de matriz}), we have $G''_{(i,j),(i',j')} = G_{i,i'} G'_{j,j'}$. Since the indices $(i,j)$ have the lexicographic order, the last equation says that $G'' = G \otimes G'$.
\end{proof}

\begin{lemma}
	Let $(v_i)_i$ be a finite sequence in some vector space $V$ and let $(w_j)_j$ be a finite sequence in some vector space $W$. Then $(v_i \otimes w_j)_{i,j}$ is linearly independent if, and only if, both $(v_i)_i$ and $(w_j)_j$ are linearly independent.
	\label{lemma: tensor of l.i. vectors}
\end{lemma}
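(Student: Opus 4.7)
The plan is to reduce the statement to a fact about Kronecker products of Gram matrices, by combining Lemmas \ref{lemma: gram matrix} and \ref{lemma: Gram tensor}. Since the statement does not presuppose inner products, I would first observe that linear independence is insensitive to choice of inner product, and then equip the (finite-dimensional) subspaces spanned by $(v_i)_i$ and $(w_j)_j$ with inner products, for instance by picking bases and declaring them orthonormal. This endows $V \otimes W$ with an inner product satisfying $\langle v \otimes w, v' \otimes w' \rangle = \langle v, v' \rangle \langle w, w' \rangle$, so that the setup of the preceding two lemmas applies.

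Next, I would let $G$, $G'$, and $G''$ denote the Gram matrices of $(v_i)_i$, $(w_j)_j$, and $(v_i \otimes w_j)_{i,j}$ (the last ordered lexicographically). By Lemma \ref{lemma: gram matrix}, linear independence of each sequence is equivalent to invertibility of the associated Gram matrix, and by Lemma \ref{lemma: Gram tensor} we have $G'' = G \otimes G'$. So the entire statement reduces to the claim that $G \otimes G'$ is invertible if and only if both $G$ and $G'$ are invertible.

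This final equivalence is where the actual content lies. I would justify it either by invoking the determinant identity $\det(G \otimes G') = \det(G)^m \det(G')^n$, where $n$ and $m$ are the lengths of the two sequences, or by exploiting that $G$ and $G'$ are positive semidefinite (as observed inside the proof of Lemma \ref{lemma: gram matrix}) and hence diagonalizable: the spectrum of $G \otimes G'$ then consists of all products of an eigenvalue of $G$ with an eigenvalue of $G'$, so $0$ lies in the spectrum of $G \otimes G'$ exactly when $0$ is an eigenvalue of at least one factor. Either route is standard and short.

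The hard part, to the extent there is one, is simply the Kronecker-product invertibility step; everything else is bookkeeping. A minor subtlety worth flagging is that the lemma is stated for arbitrary vector spaces, so one must open with a brief remark that putting an inner product on the relevant finite-dimensional subspaces is harmless and that linear independence does not depend on this choice — but this poses no real obstacle.
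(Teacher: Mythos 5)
Your argument is essentially identical to the paper's: both reduce the claim via Lemma \ref{lemma: gram matrix} and Lemma \ref{lemma: Gram tensor} to the invertibility of $G'' = G \otimes G'$, and settle that with the determinant identity $\det(G \otimes G') = \det(G)^m \det(G')^n$. Your opening remark about equipping the spaces with inner products is a small point of rigor the paper leaves implicit, but it does not change the route.
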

\begin{proof}
	We use the same notation of lemma \ref{lemma: Gram tensor}. Let $n$ be the size of $(v_i)_i$ and $m$ be the size of $(w_j)_j$. The sequences $(v_i)_i$, $(w_j)_j$ and $(v_i \otimes w_j)_{i,j}$ have Gram matrices $G$, $G'$ and $G''$, respectively. By lemma \ref{lemma: Gram tensor}, we have $G'' = G \otimes G'$. Also, $\det(G'') = \det(G \otimes G') = \det(G)^m \det(G')^n$, so $G''$ is invertible if, and only if, both $G$ and $G'$ are invertible. In this case, $G''^{-1} = G^{-1} \otimes G'^{-1}$. Using lemma \ref{lemma: gram matrix} we conclude that $(v_i \otimes w_j)_{i,j}$ is linearly independent if, and only if, both $(v_i)_i$ and $(w_j)_j$ are linearly independent.
\end{proof}

\subsection{Convex Sets}

Given a vector space $X$, a convex set $C \subseteq X$ is a subset $C$ that is closed by convex combinations. A convex combination of vectors $v_1$,...,$v_n$ is a linear combination $\sum_{i=1}^n p_i v_i$, where $p_i \geq 0$ and $\sum_{i=1}^n p_i = 1$. For any subset $S\subseteq X$, its convex hull is the set of convex combinations of points of $S$. Therefore, a subset $C\subseteq X$ is convex if it equals its convex hull. An extreme point of a convex $C$ is one that can't be written as a convex combination of different points of $C$. For example, $[0,1]$ is a convex with 0 and 1 as extreme points. By a theorem of Krein and Milman, every compact convex set of a finite dimensional vector space is the convex hull of its extreme points (theorem 3.3 of \cite{barvinokcourse}).

\begin{theorem}
	Let $X$ be a finite dimensional complex vector space and $C \subseteq X$ a compact convex set. Then $C$ is the convex hull of its extreme points.
\label{theorem: convex hull}
\end{theorem}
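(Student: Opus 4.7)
The plan is to prove the result by induction on $d := \dim_{\realNumbers}(\mathrm{aff}(C))$, the dimension of the real affine hull of $C$. Since extreme points, convex combinations, and compactness only depend on the underlying real affine structure of $X$, I would view $X$ as a real vector space of dimension $2\dim_{\complexNumbers}(X)$ throughout, so that standard convex-geometric tools apply without modification. The base case $d = 0$ is trivial: $C$ is a single point and is its own unique extreme point.

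For the inductive step, fix $x \in C$ and split into two cases. If $x$ lies on the relative boundary of $C$ inside $\mathrm{aff}(C)$, I would invoke the supporting hyperplane theorem to produce a real affine hyperplane $H \subseteq \mathrm{aff}(C)$ with $x \in H$ and $C$ contained in one of the closed half-spaces determined by $H$. Then $F := C \cap H$ is a compact convex set of strictly smaller dimension containing $x$, so by the inductive hypothesis $x$ is a convex combination of extreme points of $F$. A short argument shows that every extreme point of $F$ is already extreme in $C$: if $y \in F$ is extreme in $F$ but $y = \lambda u + (1 - \lambda) v$ with $u, v \in C$ and $\lambda \in (0, 1)$, then applying the linear functional defining $H$ forces both $u$ and $v$ into $H$, hence into $F$, contradicting the extremality of $y$ there. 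If instead $x$ lies in the relative interior of $C$, any line $\ell \subseteq \mathrm{aff}(C)$ through $x$ meets $C$ in a bounded closed segment $[a, b]$ whose endpoints lie on the relative boundary; applying the previous case to $a$ and $b$ and combining with $x = (1 - t) a + t b$ for the appropriate $t \in [0,1]$ gives the required convex combination.

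The main obstacle is the supporting hyperplane theorem: that every boundary point of a convex set in a finite-dimensional real vector space admits a supporting hyperplane. This is a consequence of the Hahn–Banach separation theorem applied to $\{x\}$ and the relative interior of $C$, and once it is in hand, the induction together with the face-extremality observation closes the argument. I would also note that the convex hull so obtained is contained in $C$ by convexity, so equality of $C$ with the convex hull of its extreme points follows immediately.
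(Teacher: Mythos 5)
The paper does not actually prove this statement: it records it as the finite-dimensional Krein--Milman theorem and cites Theorem 3.3 of \cite{barvinokcourse} for the proof. Your argument --- induction on the dimension of the real affine hull, using a supporting hyperplane at relative boundary points, the observation that extreme points of the face $C \cap H$ are extreme in $C$, and reduction of relative interior points to two boundary points via a line segment --- is correct and is essentially the standard proof given in that reference, including the right remark that one must pass to the underlying real vector space since convexity is a real-affine notion. The only cosmetic caveat is that the base case should explicitly allow $C = \emptyset$ (where the claim is vacuous) alongside the singleton case, and in the relative-interior case one should note that the relative boundary is nonempty because a compact convex set of positive affine dimension cannot equal its (unbounded) affine hull; neither point affects the validity of the argument.
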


Of course, this is not true for general convex sets. For example, $[0,+\infty)$ is a convex set, but isn't compact. Also, it has 0 as its unique extreme point, whose convex hull is $\{0\}$, which isn't $[0,+\infty)$.

\subsection{CP maps}

Every map we'll consider in this article will be some special kind of completely positive (CP) map. Let $X$ and $Y$ be finite dimensional Hilbert spaces. A CP map $\varepsilon \colon X \to Y$ is a linear map $\varepsilon \colon \End(X) \to \End(Y)$ such that $\id_{\End(Z)} \otimes \varepsilon \colon \End(Z \otimes X) \to \End(Z \otimes Y)$ is a positive map, for any finite dimensional Hilbert space $Z$. A positive map is one that sends positive operators to positive operators. A positive operator is a diagonalizable operator whose eigenvalues are all real and non-negative. We'll denote as $CP(X,Y)$ the set of CP maps from $X$ to $Y$. Instead of using this definition of complete positivity, we'll often use the following characterization. A linear map $\varepsilon \colon \End(X) \to \End(Y)$ is CP if, and only if, it is of the form $\varepsilon(A) = \sum_k E_k A E_k^\dag$, for some finite sequence $(E_k)_k$ of operators $E_k \colon X \to Y$. The operators $E_k$ are called operation elements or Kraus operators. Note that the sequence $(E_k)_k$ that represents a CP map is not unique. The minimum number of operators $E_k$ necessary to represent $\varepsilon$ is called its Choi rank, which we'll denote by $CR(\varepsilon)$. The number of indices $k$ in the sequence $(E_k)_k$ equals the Choi rank if, and only if, $(E_k)_k$ is linearly independent. For a proof of these statements, see \cite{CHOI1975285} or \cite{watrous_2018}.

There is a duality between CP maps that will be useful to relate CPT and UCP maps. Since $\End(X)$ and $\End(Y)$ are Hilbert spaces with the Hilbert-Schmidt inner product, we have a notion of dual map $\varepsilon^\dag$. It is defined as the unique linear map $\varepsilon^\dag \colon \End(Y) \to \End(X)$ that satisfies $\langle \varepsilon(A), B \rangle = \langle A, \varepsilon^\dag(B) \rangle$, for every $A \in \End(X)$ and $B \in \End(Y)$.

\begin{lemma}
	A linear map $\varepsilon\colon \End(X) \to \End(Y)$ is CP if, and only if, its dual $\varepsilon^\dag \colon \End(Y) \to \End(X)$ is CP. Also, if $\varepsilon$ has operation elements $(E_k)_k$, then $(E_k^\dag)_k$ are operation elements for $\varepsilon^\dag$.
	\label{lemma: duality of cps}
\end{lemma}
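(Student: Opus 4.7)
The plan is to work directly with the Kraus representation. Since the characterization in the preceding paragraph says a linear map is CP iff it admits operation elements, it suffices to show that if $\varepsilon(A) = \sum_k E_k A E_k^\dag$, then $\varepsilon^\dag(B) = \sum_k E_k^\dag B E_k$. This single identity yields both directions of the equivalence and the claim about operation elements simultaneously.

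The key computation is to expand the defining equation of the dual. For arbitrary $A \in \End(X)$ and $B \in \End(Y)$, I would write
\[
\langle \varepsilon(A), B \rangle = \mathrm{tr}\bigl(\varepsilon(A)^\dag B\bigr) = \sum_k \mathrm{tr}\bigl((E_k A E_k^\dag)^\dag B\bigr) = \sum_k \mathrm{tr}(E_k A^\dag E_k^\dag B),
\]
then apply cyclicity of the trace to move $E_k$ to the right, obtaining $\sum_k \mathrm{tr}(A^\dag E_k^\dag B E_k) = \langle A, \sum_k E_k^\dag B E_k \rangle$. Since the dual is uniquely determined by the defining relation $\langle \varepsilon(A), B \rangle = \langle A, \varepsilon^\dag(B) \rangle$, this forces $\varepsilon^\dag(B) = \sum_k E_k^\dag B E_k$. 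In particular $\varepsilon^\dag$ has Kraus form with operation elements $(E_k^\dag)_k$, hence is CP.

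For the converse, I would note that $(\varepsilon^\dag)^\dag = \varepsilon$ (a routine consequence of the uniqueness of the dual map and the sesquilinearity of the Hilbert-Schmidt inner product), so if $\varepsilon^\dag$ is CP, applying the previous paragraph to $\varepsilon^\dag$ in place of $\varepsilon$ shows that $\varepsilon$ is CP with operation elements obtained by daggering those of $\varepsilon^\dag$.

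There is no real obstacle here; the only point requiring a little care is keeping track of the daggers and the placement of the conjugate inside the trace, which is handled by the identity $(E_k A E_k^\dag)^\dag = E_k A^\dag E_k^\dag$ together with the cyclicity of the trace. The involutivity of $\dag$ on maps between Hilbert-Schmidt spaces is the one background fact I would invoke without proof.
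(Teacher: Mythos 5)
Your proposal is correct and follows essentially the same route as the paper: expand $\langle \varepsilon(A), B\rangle$ in Kraus form, use cyclicity of the trace to identify $\varepsilon^\dag(B) = \sum_k E_k^\dag B E_k$ by uniqueness of the dual, and obtain the converse from $\varepsilon^{\dag\dag} = \varepsilon$. No issues.
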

\begin{proof}
	Suppose that $\varepsilon$ is a completely positive map with operation elements $(E_k)_k$. For any $A \in \End(X)$ and $B \in \End(Y)$, we have
	\begin{align*}
		\langle A, \varepsilon^\dag(B) \rangle &= \langle \varepsilon(A), B \rangle\\
		&=  \langle \sum_k E_k A E_k^\dag, B \rangle\\
		&= \sum_k \langle E_k A E_k^\dag, B \rangle\\
		&= \sum_k tr((E_k A E_k^\dag)^\dag B)\\
		&= \sum_k tr(E_k A^\dag E_k^\dag B)\\
		&= \sum_k tr(A^\dag E_k^\dag B E_k)\\
		&= tr(A^\dag \sum_k E_k^\dag B E_k)\\
		&= \langle A, \sum_k E_k^\dag B E_k \rangle.
	\end{align*}
	Since $\varepsilon^\dag$ is unique, we conclude that $\varepsilon^\dag (B) = \sum_k E_k^\dag B E_k$. Therefore, $\varepsilon^\dag$ is completely positive and has $(E_k^\dag)_k$ as operation elements.
	
	If we assume instead that $\varepsilon^\dag$ is completely positive, using that $\varepsilon^{\dag\dag} = \varepsilon$ we conclude, by the previous results, that $\varepsilon$ is completely positive. Therefore, $\varepsilon$ is completely positive if, and only if, $\varepsilon^\dag$ is completely positive.
\end{proof}

The tensor product will be a central topic of this article, so we review some results about it. For any CP maps $\varepsilon \in CP(X,Y)$ and $\varepsilon' \in CP(X',Y')$ there is the tensor product $\varepsilon \otimes \varepsilon'$. We'll see that $\varepsilon \in CP(X \otimes X', Y \otimes Y')$. Also, given operation elements for $\varepsilon$ and $\varepsilon'$, we can construct operation elements for $\varepsilon \otimes \varepsilon'$ as in the next lemma.

\begin{lemma}
	If $\varepsilon \in CP(X,Y)$ and $\varepsilon' \in CP(X',Y')$ have operation elements $(E_k)_k$ and $(F_l)_l$, respectively, then $\varepsilon \otimes \varepsilon'$ have $(E_k \otimes F_l)_{k,l}$ as operation elements. In particular, $\varepsilon \in CP(X \otimes X', Y \otimes Y')$.
	\label{lemma: tensor product operation elements}
\end{lemma}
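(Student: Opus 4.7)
The plan is to verify the claimed formula for $\varepsilon \otimes \varepsilon'$ on elementary tensors $A \otimes A'$ (with $A \in \End(X)$ and $A' \in \End(X')$) and then invoke linearity plus the fact that such elementary tensors span $\End(X \otimes X')$. Since by construction $\varepsilon \otimes \varepsilon'$ is the unique linear extension of $(A, A') \mapsto \varepsilon(A) \otimes \varepsilon'(A')$, it suffices to show that the candidate map $A \otimes A' \mapsto \sum_{k,l} (E_k \otimes F_l)(A \otimes A')(E_k \otimes F_l)^\dag$ agrees with $\varepsilon \otimes \varepsilon'$ on such elementary tensors.

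First I would expand
\[
(\varepsilon \otimes \varepsilon')(A \otimes A') = \varepsilon(A) \otimes \varepsilon'(A') = \Bigl(\sum_k E_k A E_k^\dag\Bigr) \otimes \Bigl(\sum_l F_l A' F_l^\dag\Bigr),
\]
and then use bilinearity of the tensor product of operators to push the sums outside, obtaining $\sum_{k,l} (E_k A E_k^\dag) \otimes (F_l A' F_l^\dag)$. The main ingredient is then the interchange law for tensor products of linear maps, $(PQ) \otimes (RS) = (P \otimes R)(Q \otimes S)$, which follows directly from the definition of the tensor product of linear maps (applying both sides to an elementary tensor gives the same result). Applied twice, this rewrites each summand as $(E_k \otimes F_l)(A \otimes A')(E_k^\dag \otimes F_l^\dag)$.

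The final step is to observe that $(E_k \otimes F_l)^\dag = E_k^\dag \otimes F_l^\dag$; this is a routine consequence of the definition of the inner product on the tensor product Hilbert space, since $\langle (E_k \otimes F_l)(x \otimes x'), y \otimes y' \rangle = \langle E_k x, y \rangle \langle F_l x', y' \rangle = \langle x \otimes x', (E_k^\dag \otimes F_l^\dag)(y \otimes y') \rangle$ on elementary tensors. Combining everything yields $(\varepsilon \otimes \varepsilon')(A \otimes A') = \sum_{k,l} (E_k \otimes F_l)(A \otimes A')(E_k \otimes F_l)^\dag$ on elementary tensors, and linearity extends this to all of $\End(X \otimes X')$. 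The "in particular" clause is then immediate from the characterization of CP maps in terms of Kraus representations recalled just before the lemma.

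I do not expect any serious obstacle here: the proof is essentially a bookkeeping exercise with the interchange law and the behavior of the adjoint under tensor products. The only subtle point worth being careful about is making sure that the manipulations with infinite-looking sums are really just finite reindexings, and that the appeal to linearity uses the universal property rather than any particular choice of basis for $\End(X \otimes X')$.
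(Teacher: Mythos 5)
Your proposal is correct and follows essentially the same route as the paper's proof: verify the Kraus formula on elementary tensors $A\otimes A'$ using the interchange law $(PQ)\otimes(RS)=(P\otimes R)(Q\otimes S)$ and the identity $(E_k\otimes F_l)^\dag=E_k^\dag\otimes F_l^\dag$, then extend by linearity. The only difference is that you spell out the justification of the adjoint identity, which the paper uses without comment.
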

\begin{proof}
	$\varepsilon\otimes \varepsilon'$ is characterized as the map that satisfies $(\varepsilon \otimes \varepsilon')(A\otimes B) = \varepsilon(A) \otimes \varepsilon'(B)$, for any $A \in \End(X)$ and $B \in \End(X')$. Therefore $(\varepsilon \otimes \varepsilon')(A\otimes B) = \varepsilon(A) \otimes \varepsilon'(B) = (\sum_k E_k A E_k^\dag) \otimes (\sum_l F_l B F_l^\dag) = \sum_{k,l} E_k A E_k^\dag \otimes \sum_l F_l B F_l^\dag \stackrel{(*)}{=} \sum_{k,l} (E_k \otimes F_l)(A\otimes B) (E_k^\dag \otimes F_l^\dag) = \sum_{k,l} (E_k \otimes F_l)(A\otimes B) (E_k \otimes F_l)^\dag$. In the equality ($*$) we used the identity $(f' \otimes g')(f \otimes g) = f'f \otimes g'g$. The identity $(\varepsilon \otimes \varepsilon')(A\otimes B) = \sum_{k,l} (E_k \otimes F_l)(A\otimes B)(E_k \otimes F_l)^\dag$ together with the linearity of $\varepsilon \otimes \varepsilon'$ implies that $(\varepsilon \otimes \varepsilon')(C) = \sum_{k,l} (E_k \otimes F_l)C(E_k \otimes F_l)^\dag$, for any $C \in \End(X \otimes X')$. Therefore $(E_k \otimes F_l)_{k,l}$ are operation elements for $\varepsilon \otimes \varepsilon'$. Since $\varepsilon \otimes \varepsilon'$ is of the form $(\varepsilon \otimes \varepsilon')(C) = \sum_{k,l} (E_k \otimes F_l)C(E_k \otimes F_l)^\dag$, it is completely positive.
\end{proof}

\begin{corollary}
	For any CP maps $\varepsilon$ and $\varepsilon'$ we have $CR(\varepsilon \otimes \varepsilon') = CR(\varepsilon) CR(\varepsilon')$.
\label{corollary: CR of tensor product}
\end{corollary}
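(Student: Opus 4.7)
The plan is to combine Lemma \ref{lemma: tensor product operation elements} with Lemma \ref{lemma: tensor of l.i. vectors} and the criterion recalled right before Lemma \ref{lemma: duality of cps}, namely that a Kraus representation $(E_k)_k$ of a CP map attains the Choi rank if and only if the family $(E_k)_k$ is linearly independent in the Hilbert-Schmidt space $\Hom(X,Y)$.

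First, I would pick minimal Kraus representations $(E_k)_{k=1,\ldots,CR(\varepsilon)}$ of $\varepsilon$ and $(F_l)_{l=1,\ldots,CR(\varepsilon')}$ of $\varepsilon'$. By the characterization of the Choi rank these two families are linearly independent. By Lemma \ref{lemma: tensor product operation elements}, the family $(E_k \otimes F_l)_{k,l}$ is a Kraus representation of $\varepsilon \otimes \varepsilon'$ with $CR(\varepsilon)\,CR(\varepsilon')$ elements; hence $CR(\varepsilon \otimes \varepsilon') \leq CR(\varepsilon)\,CR(\varepsilon')$.

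For the reverse inequality, I would observe that Lemma \ref{lemma: tensor of l.i. vectors} applies in the inner-product space $\Hom(X \otimes X', Y \otimes Y')$ (viewing the Kraus operators as vectors of this Hilbert-Schmidt space), so the linear independence of $(E_k)_k$ and $(F_l)_l$ forces $(E_k \otimes F_l)_{k,l}$ to be linearly independent as well. Using the same criterion again in the other direction, a linearly independent Kraus representation realizes the Choi rank, so $CR(\varepsilon \otimes \varepsilon') = CR(\varepsilon)\,CR(\varepsilon')$.

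There is essentially no obstacle: the whole argument is just bookkeeping that chains together the three facts already stated (the Choi-rank characterization, Lemma \ref{lemma: tensor product operation elements}, and Lemma \ref{lemma: tensor of l.i. vectors}). The only point that might need a careful sentence is the distinction between the tensor product of operators $E_k \otimes F_l$ as an element of $\Hom(X \otimes X', Y \otimes Y')$ and the tensor product of the abstract vectors to which Lemma \ref{lemma: tensor of l.i. vectors} applies; but since $\Hom(X,Y) \otimes \Hom(X',Y')$ embeds canonically into $\Hom(X \otimes X', Y \otimes Y')$ and the embedding sends $E_k \otimes F_l$ to the Kronecker-type operator with the same name, the invocation of Lemma \ref{lemma: tensor of l.i. vectors} is legitimate.
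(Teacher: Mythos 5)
Your proposal is correct and follows essentially the same route as the paper: both arguments take minimal (hence linearly independent) Kraus families for $\varepsilon$ and $\varepsilon'$, invoke Lemma \ref{lemma: tensor product operation elements} to get $(E_k \otimes F_l)_{k,l}$ as a Kraus representation of $\varepsilon \otimes \varepsilon'$, and use Lemma \ref{lemma: tensor of l.i. vectors} together with the linear-independence characterization of the Choi rank to conclude that this representation is minimal. Your extra remark about identifying $\Hom(X,Y) \otimes \Hom(X',Y')$ inside $\Hom(X \otimes X', Y \otimes Y')$ is a fair point of care that the paper leaves implicit.
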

\begin{proof}
	Let $(E_k)_k$ be linearly independent operation elements for $\varepsilon$ and let $(F_l)_l$ be linearly independent operation elements for $\varepsilon'$. By lemmas \ref{lemma: tensor of l.i. vectors} and \ref{lemma: tensor product operation elements} we know that $(E_k \otimes F_l)_{k,l}$ are linearly independent operation elements for $\varepsilon \otimes \varepsilon'$. Since the operations elements are linearly independent, the sequences $(E_k)_k$ and $(F_l)_l$ have $CR(\varepsilon)$ and $CR(\varepsilon')$ operators, respectively. Therefore $(E_k \otimes F_l)_{k,l}$ is a sequence with $CR(\varepsilon) CR(\varepsilon')$ operators. But $(E_k \otimes F_l)_{k,l}$ is linearly independent, so it must have $CR(\varepsilon \otimes \varepsilon')$ operators. Therefore $CR(\varepsilon \otimes \varepsilon') = CR(\varepsilon) CR(\varepsilon')$.
\end{proof}

For later use, we'll show that $(\varepsilon \otimes \varepsilon')^\dag = \varepsilon^\dag \otimes \varepsilon'^\dag$.

\begin{lemma}
	Let $\varepsilon$ and $\varepsilon'$ be completely positive maps. We have that $(\varepsilon \otimes \varepsilon')^\dag = \varepsilon^\dag \otimes \varepsilon'^\dag$.
	\label{lemma: dual of tensor of cp maps}
\end{lemma}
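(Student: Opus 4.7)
My plan is to show $(\varepsilon \otimes \varepsilon')^\dag = \varepsilon^\dag \otimes \varepsilon'^\dag$ by producing a common sequence of operation elements for both sides. This reduces the statement to already-established lemmas and avoids any direct inner product computation on $\End(Y\otimes Y')$.

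I would start by fixing arbitrary operation elements $(E_k)_k$ for $\varepsilon$ and $(F_l)_l$ for $\varepsilon'$. For the left-hand side I chain the two earlier lemmas in one order: Lemma \ref{lemma: tensor product operation elements} gives $(E_k \otimes F_l)_{k,l}$ as operation elements for $\varepsilon \otimes \varepsilon'$, and then Lemma \ref{lemma: duality of cps} gives $\bigl((E_k \otimes F_l)^\dag\bigr)_{k,l}$ as operation elements for $(\varepsilon \otimes \varepsilon')^\dag$. For the right-hand side I chain them in the opposite order: Lemma \ref{lemma: duality of cps} applied separately gives $(E_k^\dag)_k$ for $\varepsilon^\dag$ and $(F_l^\dag)_l$ for $\varepsilon'^\dag$, and then Lemma \ref{lemma: tensor product operation elements} gives $(E_k^\dag \otimes F_l^\dag)_{k,l}$ as operation elements for $\varepsilon^\dag \otimes \varepsilon'^\dag$.

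At this point everything reduces to the identity $(E_k \otimes F_l)^\dag = E_k^\dag \otimes F_l^\dag$. Once this identity is invoked, both $(\varepsilon \otimes \varepsilon')^\dag$ and $\varepsilon^\dag \otimes \varepsilon'^\dag$ act as $C \mapsto \sum_{k,l} (E_k^\dag \otimes F_l^\dag)\, C\, (E_k^\dag \otimes F_l^\dag)^\dag$, so they agree on every $C \in \End(Y\otimes Y')$ and the desired equality follows.

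The only real obstacle is justifying the commutation of the adjoint with the tensor product. This is standard, but if it feels too implicit I would prove it in one line from the multiplicativity of the tensor inner product: for $x,x' \in X$ and $y,y' \in Y$, the computation $\langle (E\otimes F)(x\otimes y),\, x'\otimes y'\rangle = \langle Ex, x'\rangle\langle Fy, y'\rangle = \langle x\otimes y,\, (E^\dag \otimes F^\dag)(x'\otimes y')\rangle$, combined with uniqueness of the adjoint and the fact that elementary tensors span, yields $(E\otimes F)^\dag = E^\dag \otimes F^\dag$.
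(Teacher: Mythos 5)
Your proof is correct, but it takes a genuinely different route from the paper. The paper works directly with the defining property of the dual map: it evaluates $\langle (\varepsilon\otimes\varepsilon')(A\otimes A'),\, B\otimes B'\rangle$ on elementary tensors, uses the multiplicativity of the Hilbert--Schmidt inner product (equation~(\ref{equation: HS multiplicativity})) to factor it as $\langle A,\varepsilon^\dag(B)\rangle\langle A',\varepsilon'^\dag(B')\rangle$, and concludes by uniqueness of the adjoint together with the fact that elementary tensors span $\End(Y\otimes Y')$. You instead reduce everything to Kraus representations: chaining Lemma~\ref{lemma: tensor product operation elements} and Lemma~\ref{lemma: duality of cps} in the two possible orders produces the operation elements $\bigl((E_k\otimes F_l)^\dag\bigr)_{k,l}$ and $\bigl(E_k^\dag\otimes F_l^\dag\bigr)_{k,l}$ for the two sides, and the operator identity $(E\otimes F)^\dag=E^\dag\otimes F^\dag$ makes these literally the same sequence, so both maps are given by the same formula $C\mapsto\sum_{k,l}(E_k^\dag\otimes F_l^\dag)\,C\,(E_k^\dag\otimes F_l^\dag)^\dag$. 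Your version is more concrete and recycles lemmas already proved, at the cost of depending on the existence of a Kraus decomposition (harmless here, since the lemma is stated for CP maps) and on the adjoint--tensor commutation identity; you are right that the latter deserves a line of justification, and your one-line argument via $\langle (E\otimes F)(x\otimes y), x'\otimes y'\rangle=\langle Ex,x'\rangle\langle Fy,y'\rangle$ is exactly the standard one --- note the paper itself uses this identity without comment in the proof of Lemma~\ref{lemma: tensor product of cpt}. The paper's argument is representation-free and slightly more self-contained given that equation~(\ref{equation: HS multiplicativity}) was already established; either proof is acceptable.
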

\begin{proof}
	Lets write that $\varepsilon \in CP(X,Y)$ and $\varepsilon' \in CP(X',Y')$. Both $(\varepsilon \otimes \varepsilon')^\dag$ and $\varepsilon^\dag \otimes \varepsilon'^\dag$ are elements of $CP(Y\otimes Y', X \otimes X')$. Their domain is $\End(Y\otimes Y')$, which is spanned by the maps $A \otimes A'$, for $A \in \End(Y)$ and $A' \in \End(Y')$. Similarly for the codomain $\End(X \otimes X')$, it is spanned by the maps $B \otimes B'$ for $B \in \End(X)$ and $B' \in \End(X')$. By linearity, we'll only need to consider these $A\otimes A'$ and $B\otimes B'$. By definition of the dual, $(\varepsilon \otimes \varepsilon')^\dag$ is the unique linear map that satisfies $\langle (\varepsilon \otimes \varepsilon')(A \otimes A') , B \otimes B' \rangle = \langle A \otimes A', (\varepsilon \otimes \varepsilon')^\dag(B \otimes B') \rangle$. We have that
	\begin{align*}
		\langle (\varepsilon \otimes \varepsilon')(A \otimes A') , B \otimes B' \rangle &= \langle \varepsilon(A) \otimes \varepsilon'(A') , B \otimes B' \rangle\\
		&= \langle \varepsilon(A), B \rangle \langle \varepsilon'(A'), B' \rangle\\
		&= \langle A, \varepsilon^\dag(B) \rangle \langle A', \varepsilon^\dag(B') \rangle\\
		&= \langle A \otimes A', \varepsilon(B) \otimes \varepsilon'(B') \rangle\\
		&= \langle A \otimes A', (\varepsilon \otimes \varepsilon')(B \otimes B') \rangle.
	\end{align*}
	By uniqueness it follows that $(\varepsilon \otimes \varepsilon')^\dag = \varepsilon^\dag \otimes \varepsilon'^\dag$.
\end{proof}

\subsubsection{CPT maps}

A CPT map $\varepsilon \colon X \to Y$ is an $\varepsilon \in CP(X,Y)$ which is also trace preserving. Being trace preserving means that $tr(\varepsilon(A)) = tr A$, for every $A\in \End(X)$. This is equivalent to the equation $\sum_k E_k^\dag E_k = id_X$. Lets denote by $CPT(X,Y)$ the set of CPT maps $\varepsilon \colon X \to Y$.

As the two next propositions shows, $CPT(X,Y)$ is convex and compact.

\begin{prop}
%We can easily show that 
$CPT(X,Y)$ is a convex set. 
\end{prop}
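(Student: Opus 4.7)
The plan is to show that $CPT(X,Y)$ is closed under convex combinations by handling the two defining properties, complete positivity and trace preservation, separately and then combining them. Concretely, I take $\varepsilon_1, \varepsilon_2 \in CPT(X,Y)$ and $p \in [0,1]$, and I aim to verify that $\varepsilon := p \varepsilon_1 + (1-p) \varepsilon_2$ lies in $CPT(X,Y)$.

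For complete positivity, I would use the Kraus characterization recalled just before this proposition: $\varepsilon_i$ is CP iff it admits operation elements $(E_k^{(i)})_k$. If $(E_k^{(1)})_k$ are operation elements for $\varepsilon_1$ and $(F_l)_l$ are operation elements for $\varepsilon_2$, then the concatenated family $(\sqrt{p}\, E_k^{(1)})_k \cup (\sqrt{1-p}\, F_l)_l$ is a family of operators from $X$ to $Y$, well-defined because $p, 1-p \geq 0$ have real square roots. A direct computation shows that for any $A \in \End(X)$,
\begin{equation*}
\sum_k (\sqrt{p}\, E_k^{(1)}) A (\sqrt{p}\, E_k^{(1)})^\dag + \sum_l (\sqrt{1-p}\, F_l) A (\sqrt{1-p}\, F_l)^\dag = p\, \varepsilon_1(A) + (1-p)\, \varepsilon_2(A),
\end{equation*}
so $\varepsilon$ is of Kraus form and therefore CP.

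For trace preservation, linearity of the trace immediately gives $tr(\varepsilon(A)) = p\, tr(\varepsilon_1(A)) + (1-p)\, tr(\varepsilon_2(A)) = p\, tr(A) + (1-p)\, tr(A) = tr(A)$, using $\varepsilon_i$ trace preserving and $p + (1-p) = 1$. Hence $\varepsilon$ preserves the trace.

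There is no real obstacle here; the argument is essentially bookkeeping, with the only minor point being the use of $\sqrt{p}$ and $\sqrt{1-p}$ to package the convex weights inside the Kraus operators so that the resulting map is manifestly of CP form. The proof extends verbatim from binary convex combinations to arbitrary finite convex combinations $\sum_i p_i \varepsilon_i$ by using operation elements $(\sqrt{p_i}\, E_k^{(i)})_{i,k}$, so I would state and prove it for two maps and note that the general case follows by induction or by the same direct computation.
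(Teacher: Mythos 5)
Your proof is correct and follows essentially the same route as the paper: absorb the convex weights into the Kraus operators as $\sqrt{p_i}\,E_{i,k_i}$ to exhibit the convex combination as a CP map. The only cosmetic difference is that you verify trace preservation directly from linearity of the trace, whereas the paper checks the equivalent Kraus condition $\sum_k E_k^\dag E_k = \id_X$; both are fine.
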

\begin{proof}

Let $p_1,...,p_n$ be numbers satisfying $p_i \geq 0$ and $\sum_{i=1}^n p_i=1$. Also, let $\varepsilon_1$,...,$\varepsilon_n$ be CPT maps in $CPT(X,Y)$. Let $(E_{i,k_i})_{k_i}$ be operation elements for $\varepsilon_i$, so that $\varepsilon_i(A) = \sum_{k_i} E_{i,k_i} A E_{i,k_i}^\dag$. Then the convex combination of $\varepsilon_i$ is given by $(\sum_{i=1}^n p_i \varepsilon_i)(A) $ $=$ $ \sum_{i=1}^n p_i \sum_{k_i} E_{i,k_i}A E_{i,k_i}^\dag $ $=$ $ \sum_{i=1}^n \sum_{k_i}$ $ (\sqrt{p_i}E_{i,k_i})A $ $(\sqrt{p_i}E_{i,k_i})^\dag$. This shows that $\sum_{i=1}^n p_i \varepsilon_i$ is completely positive and has operation elements $(\sqrt{p_i} E_{i,k_i})_{i,k_i}$. Also, it is trace preserving, since $\sum_{i,k_i} (\sqrt{p_i} E_{i,k_i})^\dag $ $ (\sqrt{p_i} E_{i,k_i})$ $=$ $ \sum_i p_i \sum_{k_i} E_{i,k_i}^\dag$ $ E_{i,k_i} $ $=$ $\sum_i p_i id_X = id_X$.
\end{proof}

For the compacity we use the \CJ isomorphism \cite{watrous_2018}. There is a well known basis dependent version of this isomorphism, but here we use a basis independent one. Given an orthonormal basis $x_1$, ..., $x_n$ of $X$, let $x^1$,...,$x^n$ denote the dual basis for $X^*$. That is, $x^i \in X^*$ and $x^i(x_j) = \delta_{i,j}$. Let $\Omega = \sum_{i=1}^n x^i \otimes x_i \in X^* \otimes X$.  The \CJ isomorphism is the linear isomorphism $CJ \colon$ $ \Hom(\End(X),\End(Y)) $ $\to$ $\End(X^* \otimes Y)$ given by $CJ(\varepsilon) = (id_{\End(X^*)} \otimes \varepsilon)(|\Omega \rangle \langle \Omega |)$ (in Dirac notation). When restricted to $CPT(X,Y)$, it gives a bijection $CJ \colon CPT(X,Y) \to \{A \in \End(X^* \otimes Y) \mid A \text{ is positive, } tr_Y A = id_{X^*}\}$. The well known basis dependent version differs by replacing $x^i$ with $x_i$ and $X^*$ with $X$ in the previous definitions.

\begin{prop}
$CPT(X,Y)$ is compact, in the norm topology.
\end{prop}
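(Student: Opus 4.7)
The plan is to use the Choi-Jamio\l{}kowski isomorphism to transfer the problem from $CPT(X,Y)$ to the concrete subset
\[
S = \{A \in \End(X^* \otimes Y) \mid A \text{ is positive, } tr_Y A = id_{X^*}\}
\]
of the finite-dimensional space $\End(X^* \otimes Y)$, where compactness reduces via Heine-Borel to closedness and boundedness. Since $CJ$ is a linear isomorphism between finite-dimensional normed spaces, both $CJ$ and $CJ^{-1}$ are continuous homeomorphisms, so it suffices to show that $S$ is compact; then $CPT(X,Y) = CJ^{-1}(S)$ inherits compactness.

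First I would show that $S$ is closed in $\End(X^* \otimes Y)$. The trace-out condition $tr_Y A = id_{X^*}$ is affine-linear in $A$, so it defines a closed subset. Positivity (in the sense of self-adjointness with non-negative spectrum) is also a closed condition: for instance, $A$ is positive iff $\langle v, A v \rangle \geq 0$ for all $v \in X^* \otimes Y$, which is an intersection of closed half-spaces (together with the closed condition $A = A^\dag$). Hence $S$, being the intersection of these closed sets, is closed.

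Next I would bound $S$. For any $A \in S$, taking the trace on the $X^*$ factor gives $tr A = tr_{X^*}(tr_Y A) = tr_{X^*}(id_{X^*}) = \dim X$. Since $A$ is positive, its eigenvalues are non-negative reals summing to $\dim X$, so each eigenvalue lies in $[0, \dim X]$; in particular, the Hilbert-Schmidt norm satisfies $\|A\|_{HS}^2 = \sum_i \lambda_i^2 \leq (\dim X)^2$. Thus $S$ is bounded in the finite-dimensional normed space $\End(X^* \otimes Y)$.

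By Heine-Borel, $S$ is compact. Continuity of $CJ^{-1}$, which follows from $CJ$ being a linear isomorphism between finite-dimensional spaces, then gives that $CPT(X,Y) = CJ^{-1}(S)$ is compact. No step here is really an obstacle; the only subtlety worth highlighting in the write-up is that the norm topologies on $CPT(X,Y)$ and on $\End(X^* \otimes Y)$ are both the unique Hausdorff topological vector space topologies on finite-dimensional spaces, so $CJ$ is automatically a homeomorphism and we do not have to pick norms carefully.
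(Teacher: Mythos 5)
Your proof is correct and follows essentially the same route as the paper: transfer to the Choi--Jamio\l{}kowski image, show that set is closed and bounded in the finite-dimensional space $\End(X^*\otimes Y)$, and invoke Heine--Borel together with continuity of the (inverse) isomorphism. The only cosmetic differences are that you bound the Hilbert--Schmidt norm where the paper bounds the operator norm, and you argue closedness via intersections of closed sets rather than by sequences; both are immaterial.
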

\begin{proof}
By the \CJ isomorphism, $CPT(X,Y)$ is homeomorphic to $CJ(X,Y)\coloneqq \{A \in \End(X^* \otimes Y) \mid A \text{ is positive, } tr_Y A = id_{X^*}\}$. Since for finite dimensions every norm is equivalent, it suffices to show that $CJ(X,Y)$ is bounded and closed, with respect to any norm on $CJ(X,Y)$. We show this for the operator norm.

First we show that $CJ(X,Y)$ is bounded. For any $A \in CJ(X,Y)$ we have $tr_Y A = id_{X^*}$, therefore $tr A = tr_{X^*} tr_Y A = tr_{X^*} id_{X^*} = \dim X^* = \dim X$. Since $A$ is also positive, it is diagonalizable and has non negative eigenvalues. Let $Sp(A)$ be the spectrum of $A$, that is, its set of eigenvalues. Then $||A|| = max_{\lambda \in Sp(A)} |\lambda| \stackrel{(*)}{=} max_{\lambda \in Sp(A)} \lambda = max (Sp(A))$. In $(*)$ we've used that $A$ has non negative eigenvalues. Since $tr A$ is the sum of the eigenvalues of $A$, we have $\lambda \leq tr A \leq dim X$, for any $\lambda \in Sp(A)$. Therefore $||A|| \leq dim X$. This concludes that $CJ(X,Y)$ is bounded.

Now lets show that $CJ(X,Y)$ is closed. Let $(A_n)_{n \in \naturalNumbers}$ be a sequence in $CJ(X,Y)$ that converges to some operator $A \in \End(X^* \otimes Y)$. We can easily show that $A$ is positive. Let $v \in X^* \otimes Y$, then $\langle v, A(v) \rangle = \langle v, \lim_n  A_n(v) \rangle$. Since the inner product is continuous in both its entries, then $\langle v, \lim_n  A_n(v) \rangle = \lim_n \langle v, A_n(v) \rangle$. Since each $A_n$ is positive, we have $\langle v, A_n(v) \rangle \geq 0$, so $\lim_n \langle v, $ $A_n(v) \rangle$ $ \geq$ $0$. Therefore $\langle v, A(v) \rangle \geq 0$, for any $v \in X^* \otimes Y$, which implies that $A$ is positive. Also, since $tr_Y$ is linear, it is continuous. Therefore, $tr_Y A = tr_Y (\lim_n A_n) = \lim_n tr_Y A_n$. Since $tr_Y A_n = id_{X^*}$, we have $\lim_n tr_Y A_n = \lim_n id_{X^*} = id_{X^*}$. Therefore, $tr_Y A = id_X$ and we conclude that $A \in CJ(X,Y)$. This proves that $CJ(X,Y)$ is closed.

Since $CJ(X,Y)$ is bounded and closed, it is compact. Using the \CJ isomorphism we conclude that $CPT(X,Y)$ is also compact.

\end{proof}

Being a compact convex set, by theorem \ref{theorem: convex hull} we know that $CPT(X,Y)$ is the convex hull of its extreme points. 

In \cite{watrous_2018} is given the following characterization of extremality of a CPT map (theorem 2.31 of \cite{watrous_2018}):

\begin{theorem}
	Let $\varepsilon \in CPT(X,Y)$ be a CPT map with linearly independent operation elements $(E_k)_k$. Then $\varepsilon$ is an extreme point of $CPT(X,Y)$ if, and only if, the family $(E_k^\dag E_{k'})_{k,k'}$ is linearly independent.
\label{theorem: characterization of extreme cpt maps}
\end{theorem}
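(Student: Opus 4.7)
My plan is to handle the two implications separately, in both cases parametrising CP perturbations of $\varepsilon$ by Hermitian matrices $C=(c_{k,k'})$ acting on the coefficients of the Kraus operators, and extracting extremality information from how the relation $\sum_{k,k'}c_{k,k'}E_k^\dag E_{k'}$ interacts with the trace preservation condition $\sum_k E_k^\dag E_k=\id_X$.

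For the direction ``extreme implies $(E_k^\dag E_{k'})_{k,k'}$ linearly independent'' I would argue by contrapositive. Assume a nontrivial relation $\sum_{k,k'}c_{k,k'}E_k^\dag E_{k'}=0$. After splitting $C$ into its Hermitian and anti-Hermitian parts (one of which must remain nonzero and still produces a linear dependence), I may assume $C$ is Hermitian and nonzero. For $t>0$ small enough both $I\pm tC$ are positive semidefinite, so each factors as $I\pm tC=M^\pm(M^\pm)^\dag$. Define
\begin{equation*}
\varepsilon^\pm(A)\;=\;\sum_{k,k'}(\delta_{k,k'}\pm t\,c_{k,k'})\,E_k A E_{k'}^\dag\;=\;\sum_j\Bigl(\sum_k M^\pm_{k,j}E_k\Bigr)A\Bigl(\sum_{k'}M^\pm_{k',j}E_{k'}\Bigr)^\dag,
\end{equation*}
which is manifestly CP. Trace preservation of $\varepsilon^\pm$ reduces to $\id_X\pm t\sum_{k,k'}c_{k,k'}E_k^\dag E_{k'}=\id_X$, which holds by assumption, and $\tfrac{1}{2}(\varepsilon^++\varepsilon^-)=\varepsilon$ by construction. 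Finally $\varepsilon^+\neq\varepsilon^-$ because $\varepsilon^+-\varepsilon^-=2t\sum_{k,k'}c_{k,k'}E_k(\cdot)E_{k'}^\dag$; under vectorisation this corresponds to $2t\sum_{k,k'}c_{k,k'}\,\overline{E_{k'}}\otimes E_k$, which is nonzero because Lemma \ref{lemma: tensor of l.i. vectors}, applied to the linearly independent families $(E_k)_k$ and $(\overline{E_{k'}})_{k'}$, shows that $(\overline{E_{k'}}\otimes E_k)_{k,k'}$ is linearly independent. Hence $\varepsilon$ is not extreme.

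For the reverse direction, suppose $(E_k^\dag E_{k'})_{k,k'}$ is linearly independent and write $\varepsilon=p\varepsilon_1+(1-p)\varepsilon_2$ with $p\in(0,1)$ and $\varepsilon_i\in CPT(X,Y)$. Since $\varepsilon-p\varepsilon_1=(1-p)\varepsilon_2$ is CP, the Radon-Nikodym property for CP maps---the image of the Choi matrix of $\varepsilon$ equals the span of any minimal Kraus family, so any Kraus family $(F_l)_l$ of a dominated CP map has $F_l$ in the linear span of $\{E_k\}_k$---gives scalars $V_{l,k}$ with $\sqrt{p}\,F_l=\sum_k V_{l,k}E_k$. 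Substituting into $\sum_l F_l^\dag F_l=\id_X$ and using $\sum_k E_k^\dag E_k=\id_X$ yields
\begin{equation*}
\sum_{k,k'}\Bigl(\tfrac{1}{p}(V^\dag V)_{k,k'}-\delta_{k,k'}\Bigr)E_k^\dag E_{k'}\;=\;0,
\end{equation*}
and the hypothesis forces $V^\dag V=p\,I$. Because $V^\dag V$ is real, $\sum_l V_{l,k}\overline{V_{l,k'}}=\overline{(V^\dag V)_{k,k'}}=p\,\delta_{k,k'}$; expanding $\varepsilon_1(A)=\tfrac{1}{p}\sum_{k,k'}\bigl(\sum_l V_{l,k}\overline{V_{l,k'}}\bigr)\,E_k A E_{k'}^\dag$ then collapses to $\sum_k E_k A E_k^\dag=\varepsilon(A)$. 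Hence $\varepsilon_1=\varepsilon$, and symmetrically $\varepsilon_2=\varepsilon$, so $\varepsilon$ is extreme.

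The main obstacle is the Radon-Nikodym/Choi-span step in the reverse direction---it is the only place the proof is not a direct coefficient manipulation of the given relations. Once that input is available, both implications reduce to comparing coefficients of $E_k^\dag E_{k'}$ (for the trace-preservation bookkeeping, where the hypothesis enters directly) and of $E_k(\cdot)E_{k'}^\dag$ (for separating the CP maps $\varepsilon^\pm$, where the hypothesis on $(E_k)_k$ together with Lemma \ref{lemma: tensor of l.i. vectors} suffices).
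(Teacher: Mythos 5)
The paper does not prove this theorem: it quotes it as Theorem 2.31 of Watrous's book, so there is no internal proof to compare against. Your argument is essentially the standard Choi/Watrous one (perturbation by a Hermitian coefficient matrix for one direction, the Choi-range/Radon--Nikodym domination argument for the other), and it is correct in outline; the Radon--Nikodym step you flag is indeed the only nontrivial external input, and it is standard.

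One step is wrong as literally written, though it is repairable. In the contrapositive direction you claim that trace preservation of $\varepsilon^\pm(A)=\sum_{k,k'}(\delta_{k,k'}\pm t\,c_{k,k'})E_kAE_{k'}^\dag$ reduces to $\sum_{k,k'}c_{k,k'}E_k^\dag E_{k'}=0$. Computing $tr(\varepsilon^\pm(A))=\sum_{k,k'}(\delta_{k,k'}\pm t\,c_{k,k'})\,tr(E_{k'}^\dag E_k A)$ shows that the actual condition is $\sum_{k,k'}c_{k,k'}E_{k'}^\dag E_k=0$, i.e.\ the relation with coefficient matrix $C^T$ rather than $C$. Since your $C$ is Hermitian but not necessarily real, $C^T=\overline{C}$ need not lie in the annihilator subspace $N=\{M:\sum_{k,k'}m_{k,k'}E_k^\dag E_{k'}=0\}$ just because $C$ does, so the displayed reduction does not follow from your hypothesis. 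The fix is to set up the relation in the transposed order from the start: linear dependence of the family $(E_k^\dag E_{k'})_{k,k'}$ is equivalent to the existence of a nonzero $B$ with $\sum_{k,k'}b_{k,k'}E_{k'}^\dag E_k=0$, and taking adjoints of this relation shows (exactly as in your Hermitization step) that $B$ may be taken Hermitian; perturbing by that $B$ makes both the positivity of $I\pm tB$ and the trace-preservation identity come out correctly. With that relabeling, and noting in the reverse direction that the displayed coefficient is $\overline{(V^\dag V)_{k,k'}}/p-\delta_{k,k'}$ before you conclude $V^\dag V=pI$ (harmless, since linear independence kills either version), the proof is complete.
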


Notice that we can always find operation elements for $\varepsilon$ which are linearly independent. This can be achieved by applying the \CJ isomorphism to $\varepsilon$ and computing an orthonormal basis that diagonalizes $CJ(\varepsilon)$. Then the $E_k$'s can be constructed from the eigenvectors with non zero eigenvalue.

In lemma \ref{lemma: tensor product operation elements} we've shown that $CP$ maps are closed under the tensor product. For completeness, lets show the same result for CPT maps.

\begin{lemma}
	If $\varepsilon \in CPT(X,Y)$ and $\varepsilon' \in CPT(X',Y')$ then $\varepsilon \otimes \varepsilon' \in CPT(X \otimes X',Y \otimes Y')$.
	\label{lemma: tensor product of cpt}
\end{lemma}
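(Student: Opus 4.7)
The plan is to start from fixed operation elements $(E_k)_k$ for $\varepsilon$ and $(F_l)_l$ for $\varepsilon'$, and then leverage Lemma \ref{lemma: tensor product operation elements}, which already tells us that $\varepsilon \otimes \varepsilon'$ is CP with operation elements $(E_k \otimes F_l)_{k,l}$. So what remains is to verify the trace preserving condition for $\varepsilon \otimes \varepsilon'$, and the obvious way to do this is via the characterization $\sum_m K_m^\dag K_m = id$ applied to the Kraus operators $K_{k,l} = E_k \otimes F_l$.

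First I would record the two trace-preservation identities we are allowed to assume: $\sum_k E_k^\dag E_k = id_X$ and $\sum_l F_l^\dag F_l = id_{X'}$. Then I would compute
\begin{align*}
\sum_{k,l} (E_k \otimes F_l)^\dag (E_k \otimes F_l) &= \sum_{k,l} (E_k^\dag \otimes F_l^\dag)(E_k \otimes F_l)\\
&= \sum_{k,l} (E_k^\dag E_k) \otimes (F_l^\dag F_l)\\
&= \Bigl(\sum_k E_k^\dag E_k\Bigr) \otimes \Bigl(\sum_l F_l^\dag F_l\Bigr)\\
&= id_X \otimes id_{X'} = id_{X \otimes X'},
\end{align*}
using only $(A \otimes B)^\dag = A^\dag \otimes B^\dag$, the composition rule $(A \otimes B)(C \otimes D) = AC \otimes BD$, and bilinearity of $\otimes$ to pull the sum outside.

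Combining this with Lemma \ref{lemma: tensor product operation elements}, which guarantees complete positivity and supplies the Kraus operators used above, we conclude $\varepsilon \otimes \varepsilon' \in CPT(X \otimes X', Y \otimes Y')$. There is no real obstacle here; the only thing to be careful about is to write the Kraus identity for $\varepsilon \otimes \varepsilon'$ using exactly the operation elements produced by Lemma \ref{lemma: tensor product operation elements}, so that the computation is legitimate rather than presupposing what we want to prove.
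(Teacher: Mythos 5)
Your proposal is correct and follows essentially the same route as the paper: invoke Lemma \ref{lemma: tensor product operation elements} for complete positivity and the Kraus operators $(E_k \otimes F_l)_{k,l}$, then verify $\sum_{k,l}(E_k \otimes F_l)^\dag(E_k \otimes F_l) = id_{X\otimes X'}$ by the identical chain of identities. Nothing is missing.
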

\begin{proof}
	By lemma \ref{lemma: tensor product operation elements} we already know that $\varepsilon \otimes \varepsilon' \in CP(X \otimes X',Y \otimes Y')$. It remains to check that $\varepsilon \otimes \varepsilon'$ is trace preserving. Lets use the same notation of lemma \ref{lemma: tensor product operation elements}. Being trace preserving means that $\sum_{k,l} (E_k \otimes F_l)^\dag (E_k \otimes F_l) = id_{X \otimes X'}$. We prove it as follows: $\sum_{k,l} (E_k \otimes F_l)^\dag (E_k \otimes F_l) = \sum_{k,l} (E_k^\dag \otimes F_l^\dag) (E_k \otimes F_l) = \sum_{k,l} E_k^\dag E_k \otimes F_l^\dag F_l =  (\sum_k E_k^\dag E_k) \otimes (\sum_l F_l^\dag F_l) = id_X \otimes id_{X'} = id_{X \otimes X'}$.
\end{proof}

\subsubsection{UCP maps}

Unital completely positive (UCP) maps are CP maps $\varepsilon \in CP(X,Y)$ that preserve the identity, that is, $\varepsilon(id_X) = id_Y$. We denote the set of UCP maps from $X$ to $Y$ by $UCP(X,Y)$. If $\varepsilon$ has operation elements $(E_k)_k$, being unital is equivalent to the equation $\sum_k E_k E_k^\dag = id_Y$. We can see that this equation is similar to $\sum_k E_k^\dag E_k = id_X$, which is equivalent to being trace preserving. Such similarity can be seen as a consequence of the fact that UCP maps are dual to CPT maps. 

\begin{prop}
	A linear map $\varepsilon \colon \End(X) \to \End(Y)$ is CPT if, and only if, $\varepsilon^\dag \colon \End(Y) \to \End(X)$ is UCP. In particular, we have an anti-linear bijection $\dag \colon CPT(X,Y) \to UCP(Y,X)$.
\label{prop: duality cpt ucp}
\end{prop}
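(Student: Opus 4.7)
The plan is to combine Lemma \ref{lemma: duality of cps} (which handles the CP part of the equivalence) with a characterization of the trace preserving condition via the Hilbert--Schmidt inner product. Since $\dag$ is an involution on $\Hom(\End(X),\End(Y))$ (i.e.\ $\varepsilon^{\dag\dag}=\varepsilon$), once the biconditional ``CPT iff dual is UCP'' is established, the claim about the anti-linear bijection $\dag\colon CPT(X,Y)\to UCP(Y,X)$ follows immediately: well-definedness and surjectivity come from the biconditional applied to $\varepsilon$ and to $\varepsilon^\dag$ respectively, injectivity from involutivity, and anti-linearity from the general fact that $(\lambda\varepsilon)^\dag=\overline{\lambda}\,\varepsilon^\dag$.

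For the biconditional itself, the key observation is that the trace functional can be expressed through the Hilbert--Schmidt inner product as $tr(A) = \langle \id_X, A\rangle$, since $\id_X^\dag = \id_X$. First I would rewrite both sides of the trace preserving equation $tr(\varepsilon(A))=tr(A)$ as $\langle \id_Y, \varepsilon(A)\rangle = \langle \id_X, A\rangle$. Then, using the defining property of the dual together with the conjugate symmetry of the inner product, I would transform the left-hand side:
\begin{equation*}
\langle \id_Y, \varepsilon(A)\rangle = \overline{\langle \varepsilon(A), \id_Y\rangle} = \overline{\langle A, \varepsilon^\dag(\id_Y)\rangle} = \langle \varepsilon^\dag(\id_Y), A\rangle.
\end{equation*}
Hence $\varepsilon$ is trace preserving if and only if $\langle \varepsilon^\dag(\id_Y) - \id_X, A\rangle = 0$ for all $A\in \End(X)$. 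Since the Hilbert--Schmidt inner product is non-degenerate on $\End(X)$, this is equivalent to $\varepsilon^\dag(\id_Y) = \id_X$, which is exactly the unitality of $\varepsilon^\dag$.

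Combining this with Lemma \ref{lemma: duality of cps} (which guarantees that $\varepsilon$ is CP iff $\varepsilon^\dag$ is CP) gives the desired equivalence: $\varepsilon$ is CPT iff $\varepsilon^\dag$ is both CP and unital, iff $\varepsilon^\dag$ is UCP. I do not expect any serious obstacle; the only subtle point is keeping the antilinearity of the inner product in its first argument consistent when translating between $tr(\varepsilon(A))$ and the dual pairing, which is a purely bookkeeping matter resolved by the conjugate symmetry step shown above.
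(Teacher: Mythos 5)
Your proof is correct, but the core computation takes a different route from the paper's. Both arguments invoke Lemma \ref{lemma: duality of cps} to get ``$\varepsilon$ is CP iff $\varepsilon^\dag$ is CP,'' but for the trace-preserving/unital half the paper stays with Kraus operators: it reads off from that lemma that $\varepsilon^\dag$ has operation elements $(E_k^\dag)_k$, and then observes that trace preservation of $\varepsilon$ and unitality of $\varepsilon^\dag$ are literally the same equation $\sum_k E_k^\dag E_k = \id_X$. You instead argue intrinsically, using $tr(A)=\langle \id, A\rangle$ and the defining property of the dual to derive $tr(\varepsilon(A)) = \langle \varepsilon^\dag(\id_Y), A\rangle$, and then conclude $\varepsilon^\dag(\id_Y)=\id_X$ by non-degeneracy. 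Your version is slightly more general (it shows trace preservation of any linear map is equivalent to unitality of its Hilbert--Schmidt adjoint, with no appeal to a Kraus decomposition), at the cost of a little bookkeeping with conjugate symmetry, which you handle correctly; the paper's version is shorter given that the Kraus form of $\varepsilon^\dag$ was already computed. You also spell out the ``anti-linear bijection'' claim (involutivity for injectivity/surjectivity, $(\lambda\varepsilon)^\dag=\overline{\lambda}\,\varepsilon^\dag$ for anti-linearity), which the paper asserts without proof; that is a genuine, if small, addition.
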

\begin{proof}
	Assume that $\varepsilon$ is completely positive with operation elements $(E_k)_k$. By lemma \ref{lemma: duality of cps} we know that $\varepsilon^\dag$ is also completely positive and has $(E_k^\dag)_k$ as operation elements. Then $\varepsilon$ is trace preserving if $\sum_k E_k^\dag E_k = id_X$, and $\varepsilon^\dag$ is unital if $\sum_k E_k^\dag (E_k^\dag)^\dag = id_X$. Both are the same equation, therefore $\varepsilon$ is trace preserving if, and only if, $\varepsilon^\dag$ is unital.
\end{proof}

We can use the duality between $CPT$ and $UCP$ maps to transfer results between them.

\begin{prop}
	UCP(X,Y) is compact convex, in the norm topology.
\end{prop}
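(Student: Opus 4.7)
The plan is to deduce both properties from what has already been established for $CPT(Y,X)$, using the anti-linear bijection $\dag\colon CPT(Y,X) \to UCP(X,Y)$ from Proposition \ref{prop: duality cpt ucp} (with the roles of $X$ and $Y$ swapped). The two claims (convexity and compactness) are essentially two facets of a single transfer argument, since $\dag$ preserves real convex combinations and is a homeomorphism.

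For convexity, given $\varepsilon_1,\varepsilon_2 \in UCP(X,Y)$ and real $p_1,p_2 \geq 0$ with $p_1+p_2=1$, I would consider their duals $\varepsilon_i^\dag \in CPT(Y,X)$. Since $\dag$ is anti-linear but the $p_i$ are real, $(p_1 \varepsilon_1 + p_2 \varepsilon_2)^\dag = p_1 \varepsilon_1^\dag + p_2 \varepsilon_2^\dag$, which lies in the convex set $CPT(Y,X)$. Applying $\dag$ once more and using $\dag\dag=\id$ yields $p_1 \varepsilon_1 + p_2 \varepsilon_2 \in UCP(X,Y)$. Alternatively, one can mimic the direct argument already used for CPT: if $(E_{i,k_i})_{k_i}$ are operation elements for $\varepsilon_i$, then $(\sqrt{p_i}\, E_{i,k_i})_{i,k_i}$ are operation elements for the convex combination, and the unitality identity $\sum_{i,k_i} (\sqrt{p_i}\, E_{i,k_i})(\sqrt{p_i}\, E_{i,k_i})^\dag = \sum_i p_i \sum_{k_i} E_{i,k_i} E_{i,k_i}^\dag = \sum_i p_i\, \id_Y = \id_Y$ is immediate.

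For compactness, it suffices to show that $\dag\colon CPT(Y,X) \to UCP(X,Y)$ is a homeomorphism, because then the continuous image of a compact set is compact. The map $\dag$ is a bijection by Proposition \ref{prop: duality cpt ucp}. Viewed as an anti-linear map between finite-dimensional complex vector spaces, it is automatically continuous in any norm (all norms being equivalent in finite dimensions), and its inverse, which is just $\dag$ in the reverse direction, is continuous for the same reason. Hence the image of the compact set $CPT(Y,X)$ is compact, so $UCP(X,Y)$ is compact.

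The main obstacle is essentially bookkeeping: one has to be careful that $\dag$ is anti-linear rather than linear, but since convex combinations use real coefficients and continuity in finite dimensions is insensitive to this distinction, no genuine analytic difficulty arises and the transfer goes through cleanly.
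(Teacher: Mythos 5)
Your proposal is correct and follows essentially the same route as the paper: both transfer convexity and compactness from $CPT(Y,X)$ to $UCP(X,Y)$ via the anti-linear bijection $\dag$, noting that real convex combinations are preserved by anti-linearity and that $\dag$ is continuous (hence sends compact sets to compact sets) in finite dimensions. Your optional direct Kraus-operator argument for convexity is a fine alternative but is not needed.
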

\begin{proof}
	By proposition \ref{prop: duality cpt ucp}, we have an anti-linear bijection $\dag \colon CPT(X,Y) \to UCP(Y,X)$, for any finite dimensional Hilbert spaces $X$ and $Y$. Since it is anti-linear, it is also continuous. Then $UCP(X,Y)$  is the image of the compact $CPT(Y,X)$ by the continuous function $\dag$, and therefore $UCP(X,Y)$ is compact. Also, if $\varepsilon_i \in UCP(X,Y)$ and $(p_i)_i$ is a discrete probability distribution, then $\sum_i p_i \varepsilon_i = (\sum_i p_i \varepsilon_i)^{\dag \dag} =  (\sum_i p_i( \varepsilon_i)^\dag)^\dag$. Sincd $CPT(Y,X)$ is convex, we know that $\sum_i p_i( \varepsilon_i)^\dag \in CPT(Y,X)$, and therefore $ (\sum_i p_i( \varepsilon_i)^\dag)^\dag \in UCP(X,Y)$. That is, $\sum_i p_i \varepsilon_i \in UCP(X,Y)$, which implies that $UCP(X,Y)$ is convex.
\end{proof}

The anti-linear bijection $\dag \colon CPT(X,Y) \to UCP(Y,X)$ stablishes a bijection between the extreme points of $CPT(X,Y)$ and the extreme points of $UCP(Y,X)$. Using this duality, we obtain a characterization of extreme points for $UCP(X,Y)$ that is similar to theorem \ref{theorem: characterization of extreme cpt maps}.

\begin{theorem}
	Let $\varepsilon \in UCP(X,Y)$ be a UCP map with linearly independent operation elements $(E_k)_k$. Then it is an extreme point of $UCP(X,Y)$ if, and only if, $(E_k E_l^\dag)_{k,l}$ is linearly independent.
\end{theorem}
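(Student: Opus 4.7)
The plan is to transfer the characterization of extreme CPT maps (Theorem \ref{theorem: characterization of extreme cpt maps}) across the duality $\dag \colon CPT(Y,X) \to UCP(X,Y)$ of Proposition \ref{prop: duality cpt ucp}.

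First I would check that this anti-linear bijection preserves extremality. If $\varepsilon \in UCP(X,Y)$ can be written as $\varepsilon = p \varepsilon_1 + (1-p) \varepsilon_2$ with $\varepsilon_1,\varepsilon_2 \in UCP(X,Y)$ and $p \in (0,1)$, then applying $\dag$ and using that the real scalars $p, 1-p$ commute with conjugation, we get $\varepsilon^\dag = p \varepsilon_1^\dag + (1-p) \varepsilon_2^\dag$ in $CPT(Y,X)$, and conversely. Hence $\varepsilon$ is extreme in $UCP(X,Y)$ if and only if $\varepsilon^\dag$ is extreme in $CPT(Y,X)$.

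Next I would transport the hypothesis. By Lemma \ref{lemma: duality of cps}, $\varepsilon^\dag$ has operation elements $(E_k^\dag)_k$. The adjoint map $\dag \colon \Hom(X,Y) \to \Hom(Y,X)$ is an anti-linear bijection, and such maps preserve linear independence (an anti-linear relation $\sum_k \lambda_k E_k^\dag = 0$ gives $\sum_k \overline{\lambda_k} E_k = 0$, forcing all $\lambda_k = 0$). So $(E_k^\dag)_k$ is a linearly independent family of operation elements for $\varepsilon^\dag$.

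Now I would apply Theorem \ref{theorem: characterization of extreme cpt maps} to $\varepsilon^\dag \in CPT(Y,X)$ with operation elements $F_k := E_k^\dag$: the map $\varepsilon^\dag$ is extreme in $CPT(Y,X)$ if and only if $(F_k^\dag F_l)_{k,l} = (E_k E_l^\dag)_{k,l}$ is linearly independent. Combining with the first step, $\varepsilon$ is extreme in $UCP(X,Y)$ if and only if $(E_k E_l^\dag)_{k,l}$ is linearly independent, which is precisely what was claimed. There is no real obstacle here; the only point requiring a small verification is the invariance of linear independence under the anti-linear adjoint, and the fact that convex combinations (real coefficients) are preserved by the anti-linear bijection $\dag$.
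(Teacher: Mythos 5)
Your proposal is correct and follows essentially the same route as the paper: pass to $\varepsilon^\dag \in CPT(Y,X)$ via the duality of Proposition \ref{prop: duality cpt ucp}, note that $(E_k^\dag)_k$ are linearly independent operation elements for $\varepsilon^\dag$, and apply Theorem \ref{theorem: characterization of extreme cpt maps}. The only difference is that you spell out why the anti-linear bijection $\dag$ preserves extremality and linear independence, details the paper takes for granted.
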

\begin{proof}
	$\varepsilon$ is an extreme point of $UCP(X,Y)$ if, and only if, $\varepsilon^\dag$ is an extreme point of $CPT(Y,X)$. By lemma \ref{lemma: duality of cps}, $(E_k^\dag)_k$ are operation elements for $\varepsilon$. Since $(E_k)_k$ is linearly independent, it follows that $(E_k^\dag)_k$ is also linearly independent. Then, by theorem \ref{theorem: characterization of extreme cpt maps}, $\varepsilon^\dag$ is an extreme point of $CPT(Y,X)$ if, and only if, $(E_k^{\dag\dag} E_l^\dag)_{k,l}$ is linearly independent. But $E_k^{\dag\dag} = E_k$, so $\varepsilon$ is an extreme point of $UCP(X,Y)$ if, and only if, $(E_k E_l^\dag)_{k,l}$ is linearly independent.
\end{proof}

Finally, lets prove that UCP maps are closed under the tensor product.

\begin{lemma}
	If $\varepsilon \in UCP(X,Y)$ and $\varepsilon' \in UCP(X',Y')$ then $\varepsilon \otimes \varepsilon' \in UCP(X \otimes X',Y \otimes Y')$.
	\label{lemma: tensor product of ucp}
\end{lemma}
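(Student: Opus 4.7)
The plan is to mirror the proof of the analogous lemma for CPT maps, since the unitality condition $\sum_k E_k E_k^\dag = id_Y$ is formally parallel to the trace preservation condition $\sum_k E_k^\dag E_k = id_X$. I will use the characterization via operation elements.

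First, fix operation elements $(E_k)_k$ for $\varepsilon$ and $(F_l)_l$ for $\varepsilon'$. By Lemma \ref{lemma: tensor product operation elements}, I already know that $\varepsilon \otimes \varepsilon' \in CP(X \otimes X', Y \otimes Y')$ with operation elements $(E_k \otimes F_l)_{k,l}$. Hence the only remaining task is to verify the unitality equation $\sum_{k,l}(E_k \otimes F_l)(E_k \otimes F_l)^\dag = id_{Y \otimes Y'}$. Using $(E_k \otimes F_l)^\dag = E_k^\dag \otimes F_l^\dag$ together with the multiplicativity of $\otimes$ with respect to composition, this sum equals $\sum_{k,l}(E_k E_k^\dag) \otimes (F_l F_l^\dag) = \left(\sum_k E_k E_k^\dag\right) \otimes \left(\sum_l F_l F_l^\dag\right)$, which by unitality of $\varepsilon$ and $\varepsilon'$ is $id_Y \otimes id_{Y'} = id_{Y \otimes Y'}$.

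An alternative route, which I would mention only if a cleaner presentation is desired, is to invoke duality: by Proposition \ref{prop: duality cpt ucp}, $\varepsilon^\dag$ and $\varepsilon'^\dag$ are CPT, so by Lemma \ref{lemma: tensor product of cpt} their tensor $\varepsilon^\dag \otimes \varepsilon'^\dag$ is CPT; then Lemma \ref{lemma: dual of tensor of cp maps} identifies this with $(\varepsilon \otimes \varepsilon')^\dag$, and applying Proposition \ref{prop: duality cpt ucp} once more yields that $\varepsilon \otimes \varepsilon'$ is UCP.

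There is no real obstacle here: the argument is a direct computation that parallels Lemma \ref{lemma: tensor product of cpt} almost verbatim, with the only change being that the Kraus operators appear in the order $E E^\dag$ rather than $E^\dag E$. The step that deserves a moment of care is the manipulation $(E_k \otimes F_l)(E_k^\dag \otimes F_l^\dag) = E_k E_k^\dag \otimes F_l F_l^\dag$, which relies on the interchange law $(f' \otimes g')(f \otimes g) = f'f \otimes g'g$ already used in the CP case.
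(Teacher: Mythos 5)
Your proposal is correct, but your primary argument is not the one the paper uses. The paper proves this lemma entirely by duality: it writes $\varepsilon \otimes \varepsilon' = (\varepsilon^\dag \otimes \varepsilon'^\dag)^\dag$ via Lemma \ref{lemma: dual of tensor of cp maps}, notes that $\varepsilon^\dag$ and $\varepsilon'^\dag$ are CPT by Proposition \ref{prop: duality cpt ucp}, applies Lemma \ref{lemma: tensor product of cpt} to get that $\varepsilon^\dag \otimes \varepsilon'^\dag$ is CPT, and dualizes back --- exactly the ``alternative route'' you relegate to a side remark. Your main argument is instead the direct Kraus-operator computation
$\sum_{k,l}(E_k \otimes F_l)(E_k \otimes F_l)^\dag = \bigl(\sum_k E_k E_k^\dag\bigr) \otimes \bigl(\sum_l F_l F_l^\dag\bigr) = id_{Y \otimes Y'}$,
which is valid (the paper explicitly records that unitality is equivalent to $\sum_k E_k E_k^\dag = id_Y$) and is a faithful mirror of the proof of Lemma \ref{lemma: tensor product of cpt}. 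The trade-off: your computation is self-contained and slightly more elementary, needing only Lemma \ref{lemma: tensor product operation elements} and the interchange law, whereas the paper's duality argument reuses already-established machinery and avoids repeating a calculation. One further simplification available to you: since Lemma \ref{lemma: tensor product operation elements} already gives complete positivity, unitality follows in one line without Kraus operators at all, from $(\varepsilon \otimes \varepsilon')(id_X \otimes id_{X'}) = \varepsilon(id_X) \otimes \varepsilon'(id_{X'}) = id_Y \otimes id_{Y'}$.
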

\begin{proof}
	We can prove this result by using the duality between UCP and CPT maps and lemma \ref{lemma: tensor product of cpt}. Using lemma \ref{lemma: dual of tensor of cp maps} we have that $\varepsilon \otimes \varepsilon' = (\varepsilon^\dag \otimes \varepsilon'^\dag)^\dag$. We know that $\varepsilon \in UCP(X,Y)$ and $\varepsilon' \in UCP(X',Y')$, then proposition \ref{prop: duality cpt ucp} implies that $\varepsilon^\dag \in CPT(Y,X)$ and $\varepsilon'^\dag \in CPT(Y',X')$. By lemma \ref{lemma: tensor product of cpt} we know that $\varepsilon^\dag \otimes \varepsilon'^\dag \in CPT(Y \otimes Y', X \otimes X')$. Using proposition \ref{prop: duality cpt ucp} again we conclude that $(\varepsilon^\dag \otimes \varepsilon'^\dag)^\dag \in  UCP(X \otimes X',Y \otimes Y')$.
\end{proof}

\subsubsection{UCPT maps}

An UCPT map is one that is both unital and trace preserving, that is, $UCPT(X,$ $Y) $ $=$ $ UCP(X,Y) \cap CPT(X,Y)$. In this case, we restrict attention to endomorphisms ($X=Y$), since $X$ and $Y$ have necessarily the same dimension. In fact, let $\varepsilon \in UCPT(X,Y)$, then $\varepsilon (id_X) = id_Y$, because it is unital. But it is also trace preserving, so $tr(\varepsilon(id_X)) = tr(id_X)$, so $tr(id_Y) = tr(id_X)$. But $tr(id_X) = \dim X$, so $\dim X = \dim Y$. We'll denote $UCPT(X,X)$ as $UCPT(X)$.

\begin{prop}
	$UCPT(X)$ is compact convex, in the norm topology.
\end{prop}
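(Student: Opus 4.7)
The plan is to reduce the proposition to the earlier results on $UCP$ and $CPT$ via the defining identity $UCPT(X) = UCP(X,X) \cap CPT(X,X)$, together with the elementary facts that convexity and compactness are preserved under intersection.

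First I would recall, from the definition stated just before the proposition, that a map $\varepsilon \in \End(\End(X))$ lies in $UCPT(X)$ precisely when $\varepsilon \in UCP(X,X)$ and $\varepsilon \in CPT(X,X)$. Both of these sets have already been shown to be convex and compact in the norm topology (in the two propositions immediately preceding the introduction of UCPT maps, applied with $Y = X$).

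Next I would verify convexity directly: given $\varepsilon_1,\varepsilon_2 \in UCPT(X)$ and $p \in [0,1]$, the convex combination $p\varepsilon_1 + (1-p)\varepsilon_2$ lies in $UCP(X,X)$ because this set is convex, and it lies in $CPT(X,X)$ for the same reason; hence it lies in their intersection $UCPT(X)$. For compactness I would use that $\End(\End(X))$ is a finite dimensional normed space and therefore Hausdorff, so compact subsets are closed; $UCP(X,X)$ and $CPT(X,X)$ are compact and thus closed and bounded, so their intersection $UCPT(X)$ is closed and bounded, hence compact.

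There is really no serious obstacle here: the whole content of the proposition is that the intersection of two compact convex sets in a finite dimensional normed space is again compact and convex, which is a general fact, and the work establishing compactness and convexity of the factors $UCP(X,X)$ and $CPT(X,X)$ has already been carried out earlier in the section.
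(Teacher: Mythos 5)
Your proposal is correct and follows essentially the same route as the paper: both argue that $UCPT(X) = UCP(X,X) \cap CPT(X,X)$ is convex as an intersection of convex sets, and compact because it is a closed and bounded subset of a finite dimensional space, these properties being inherited from the two factors. Your version is slightly more careful in justifying that compactness of the factors yields closedness (via the Hausdorff property), but the substance is identical.
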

\begin{proof}
	Since both $UCP(X)$ and $CPT(X)$ are convex sets, its intersection \linebreak $UCPT(X)$ is also convex. Since, both $UCP(X)$ and $CPT(X)$ are bounded sets, $UCPT(X)$ is also bounded. And since both $UCP(X)$ and $CPT(X)$ are closed sets, $UCPT(X)$ is also closed. Therefore $UCPT(X)$ is closed and bounded, so it is compact. 
\end{proof}

As $UCPT(X)$ is compact convex, it is the convex hull of its extreme points. The following characterization of the extreme points of $UCPT(X)$ was proven is \cite{LANDAU1993107}. It can also be found in theorem 4.21 of \cite{watrous_2018}.

\begin{theorem}
	Let $\varepsilon \in UCPT(X)$ be an UCPT map with linearly independent operation elements $(E_k)_k$. Then it is an extreme point of $UCPT(X)$ if, and only if, $(E_k^\dag E_l \oplus E_l E_k^\dag)_{k,l}$ is linearly independent as vectors of $\End(X) \oplus \End(X)$.
\label{theorem: characterization of extreme points for ucpt}
\end{theorem}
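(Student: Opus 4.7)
The plan is to adapt the strategy used for Theorem~\ref{theorem: characterization of extreme cpt maps} (Choi's criterion for extreme CPT maps) by simultaneously imposing the unital constraint alongside trace preservation. The central observation is that any perturbation of $\varepsilon$ staying in $UCPT(X)$ can be parametrized by a Hermitian matrix $H$: the map $\varepsilon_H(A) = \sum_{k,l}(\delta_{k,l} + H_{k,l}) E_k A E_l^\dag$ is completely positive exactly when $I + H \geq 0$, is trace preserving exactly when $\sum_{k,l} H_{k,l} E_l^\dag E_k = 0$, and is unital exactly when $\sum_{k,l} H_{k,l} E_k E_l^\dag = 0$.

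For the forward direction (extreme implies independent), I would argue the contrapositive. Suppose $\varepsilon = \tfrac{1}{2}(\varepsilon_1 + \varepsilon_2)$ with $\varepsilon_1 \neq \varepsilon_2$ in $UCPT(X)$. Passing to Choi matrices via the \CJ isomorphism gives $J(\varepsilon_1) \leq 2 J(\varepsilon)$, so $\mathrm{range}(J(\varepsilon_1)) \subseteq \mathrm{range}(J(\varepsilon)) = \mathrm{span}\{\mathrm{vec}(E_k)\}_k$, which forces $\varepsilon_1$ into the form $\varepsilon_1 = \varepsilon_H$ for some nonzero Hermitian $H$. The trace preservation and unitality of $\varepsilon_1$ then yield the two scalar identities above; swapping summation indices $k \leftrightarrow l$ and using $H_{l,k} = \overline{H_{k,l}}$ converts them into $\sum_{k,l} \overline{H_{k,l}}\bigl(E_k^\dag E_l \oplus E_l E_k^\dag\bigr) = 0$, a nontrivial $\complexNumbers$-linear dependence.

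For the converse, I start from a nontrivial complex dependence $\sum_{k,l} c_{k,l}(E_k^\dag E_l \oplus E_l E_k^\dag) = 0$. Taking operator adjoints of both coordinates and relabelling $k \leftrightarrow l$ produces a companion dependence with coefficients $d_{k,l} = \overline{c_{l,k}}$, and at least one of the Hermitian combinations $H = c + d$ or $H = i(c - d)$ is nonzero. For sufficiently small $t > 0$, both $I \pm tH$ are positive semi-definite, so $\varepsilon_{\pm}(A) = \sum_{k,l}(\delta_{k,l} \pm tH_{k,l}) E_k A E_l^\dag$ are completely positive; the two scalar dependencies force each $\varepsilon_{\pm}$ to be trace preserving and unital, and linear independence of $(E_k)_k$ ensures $\varepsilon_+ \neq \varepsilon_-$. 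Since $\varepsilon = \tfrac{1}{2}(\varepsilon_+ + \varepsilon_-)$, $\varepsilon$ is not extreme.

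The main obstacle should be the range-containment step in the forward direction: one must justify that every CP map $\varepsilon_1$ with $J(\varepsilon_1) \leq 2 J(\varepsilon)$ admits a Kraus representation using operators lying in $\mathrm{span}\{E_k\}_k$. This rests on the identification $\mathrm{range}(J(\varepsilon)) = \mathrm{span}\{\mathrm{vec}(E_k)\}_k$, which in turn requires the hypothesis that $(E_k)_k$ is linearly independent (equivalently, that the number of Kraus operators equals the Choi rank, as noted in the discussion preceding Lemma~\ref{lemma: duality of cps}). Once this structural reduction is in place, the back-and-forth between the Hermitian coefficient matrix $H$ and the claimed linear dependence is algebraically routine.
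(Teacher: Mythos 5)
The paper does not actually prove this theorem: it is quoted from Landau--Streater \cite{LANDAU1993107} (see also theorem 4.21 of \cite{watrous_2018}). Your argument is essentially the standard perturbation proof from those references, and its structure is sound: the range-containment step you single out ($0 \leq J(\varepsilon_1) \leq 2J(\varepsilon)$ forces $\ker J(\varepsilon) \subseteq \ker J(\varepsilon_1)$, hence $\mathrm{range}\,J(\varepsilon_1) \subseteq \mathrm{span}\{\mathrm{vec}(E_k)\}_k$) is correct, the parametrization $\varepsilon_1 = \varepsilon_H$ with $H$ Hermitian and nonzero uses the linear independence of $(E_k)_k$ exactly where you say it does, and the forward direction goes through as written.

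There is one bookkeeping error in the converse that, read literally, breaks a step. With the theorem's indexing, the element labelled $(k,l)$ is $E_k^\dag E_l \oplus E_l E_k^\dag$, so your Hermitian matrix $H$ built from $c$ and $d=c^\dag$ satisfies $\sum_{k,l} H_{k,l}\, E_k^\dag E_l = 0$ and $\sum_{k,l} H_{k,l}\, E_l E_k^\dag = 0$; but trace preservation and unitality of $\varepsilon_{\pm tH}$ require $\sum_{k,l} H_{k,l}\, E_l^\dag E_k = 0$ and $\sum_{k,l} H_{k,l}\, E_k E_l^\dag = 0$, in which the coefficient $H_{k,l}$ sits against the \emph{transposed} pair. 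For a Hermitian $H$ these two pairs of conditions are \emph{not} equivalent (writing $H = R + iJ$ with $R$ real symmetric and $J$ real antisymmetric, they differ by the sign of the $J$-part), so as stated $\varepsilon_{\pm tH}$ need not lie in $UCPT(X)$. The repair is one line: taking operator adjoints of the two dependences satisfied by $H$ gives $\sum_{k,l} \overline{H_{k,l}}\, E_l^\dag E_k = 0$ and $\sum_{k,l} \overline{H_{k,l}}\, E_k E_l^\dag = 0$, so one should perturb with $\overline{H} = H^T$ instead of $H$; this matrix is again Hermitian and nonzero, and the rest of your argument (positivity of $I \pm t\overline{H}$ for small $t$, $\varepsilon_+ \neq \varepsilon_-$ from linear independence of $(E_k)_k$, and $\varepsilon = \tfrac12(\varepsilon_+ + \varepsilon_-)$) then applies verbatim. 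The same transpose is already implicitly present, and harmless, in your forward direction, where you correctly pass from $H$ to the dependence coefficients $\overline{H_{k,l}}$.
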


As done for the other types of CP maps, lets prove that UCPT is closed under the tensor product.

\begin{lemma}
	If $\varepsilon \in UCPT(X,Y)$ and $\varepsilon' \in UCPT(X',Y')$ then $\varepsilon \otimes \varepsilon' \in UCPT(X \otimes X',Y \otimes Y')$.
	\label{lemma: tensor product of ucpt}
\end{lemma}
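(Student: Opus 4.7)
The plan is to observe that $UCPT(X,Y) = UCP(X,Y) \cap CPT(X,Y)$, so this lemma essentially reduces to combining the two previous tensor-product closure results that have already been established for UCP and CPT maps separately. Concretely, from $\varepsilon \in UCPT(X,Y)$ I would extract the two facts $\varepsilon \in UCP(X,Y)$ and $\varepsilon \in CPT(X,Y)$, and likewise for $\varepsilon'$.

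Then I would apply lemma \ref{lemma: tensor product of ucp} to the pair $(\varepsilon,\varepsilon')$ viewed as UCP maps to conclude that $\varepsilon \otimes \varepsilon' \in UCP(X \otimes X', Y \otimes Y')$, and independently apply lemma \ref{lemma: tensor product of cpt} to the same pair viewed as CPT maps to conclude that $\varepsilon \otimes \varepsilon' \in CPT(X \otimes X', Y \otimes Y')$. Intersecting these two memberships gives the desired conclusion $\varepsilon \otimes \varepsilon' \in UCPT(X \otimes X', Y \otimes Y')$.

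If a more self-contained verification were desired, one could alternatively fix operation elements $(E_k)_k$ for $\varepsilon$ and $(F_l)_l$ for $\varepsilon'$, so that by lemma \ref{lemma: tensor product operation elements} the family $(E_k \otimes F_l)_{k,l}$ consists of operation elements for $\varepsilon \otimes \varepsilon'$. Then both the trace-preservation identity
\[
\sum_{k,l} (E_k \otimes F_l)^\dag (E_k \otimes F_l) = \Bigl(\sum_k E_k^\dag E_k\Bigr) \otimes \Bigl(\sum_l F_l^\dag F_l\Bigr) = \id_X \otimes \id_{X'} = \id_{X \otimes X'}
\]
and the unitality identity
\[
\sum_{k,l} (E_k \otimes F_l)(E_k \otimes F_l)^\dag = \Bigl(\sum_k E_k E_k^\dag\Bigr) \otimes \Bigl(\sum_l F_l F_l^\dag\Bigr) = \id_Y \otimes \id_{Y'} = \id_{Y \otimes Y'}
\]
follow by the same factorization. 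Since there is no genuine obstacle here — the argument is just bookkeeping on top of the preceding two lemmas — I would prefer the shorter intersection argument, mentioning the dual route only if the exposition calls for symmetry with how lemma \ref{lemma: tensor product of ucp} was derived from lemma \ref{lemma: tensor product of cpt} via proposition \ref{prop: duality cpt ucp}.
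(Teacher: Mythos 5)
Your primary argument — writing $UCPT$ as the intersection $UCP \cap CPT$ and invoking the two preceding tensor-product closure lemmas — is exactly the proof the paper gives, and it is correct. The alternative direct verification via operation elements is also fine but unnecessary.
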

\begin{proof}
	By lemmas \ref{lemma: tensor product of cpt} and \ref{lemma: tensor product of ucp} we know that $\varepsilon \otimes \varepsilon' \in CPT(X \otimes X',Y \otimes Y')$ and $\varepsilon \otimes \varepsilon' \in UCP(X \otimes X',Y \otimes Y')$. Since $UCPT(X \otimes X',Y \otimes Y') =  CPT(X \otimes X',Y \otimes Y') \cap UCP(X \otimes X',Y \otimes Y')$, this concludes the proof.
\end{proof}

\section{Main results}

\subsection{Extremality is preserved for CPT maps}

Now we prove that the tensor product preserves extramality for CPT maps.

\begin{theorem}
	If $\varepsilon$ is an extreme point of $CPT(X,Y)$ and $\varepsilon'$ is an extreme point of $CPT(X',Y')$, then $\varepsilon \otimes \varepsilon'$ is an extreme point of $CPT(X \otimes X', Y\otimes Y')$.
\label{theorem: extremality is preserved for cpt}
\end{theorem}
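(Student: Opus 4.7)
The plan is to apply the characterization of extremality given in Theorem \ref{theorem: characterization of extreme cpt maps} to $\varepsilon \otimes \varepsilon'$. By Lemma \ref{lemma: tensor product of cpt} we already know $\varepsilon \otimes \varepsilon' \in CPT(X \otimes X', Y \otimes Y')$, so the only task is to produce linearly independent operation elements for $\varepsilon \otimes \varepsilon'$ and to show that the associated family of products $F^\dag F'$ is linearly independent as well.

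First I would pick linearly independent operation elements $(E_k)_k$ for $\varepsilon$ and $(F_l)_l$ for $\varepsilon'$; such choices exist by the remark following Theorem \ref{theorem: characterization of extreme cpt maps}. By Lemma \ref{lemma: tensor product operation elements}, the family $(E_k \otimes F_l)_{k,l}$ is a set of operation elements for $\varepsilon \otimes \varepsilon'$, and by Lemma \ref{lemma: tensor of l.i. vectors} this family is linearly independent because each factor family is. Hence the hypotheses of Theorem \ref{theorem: characterization of extreme cpt maps} are met for $\varepsilon \otimes \varepsilon'$ with this choice, and extremality reduces to linear independence of the family
\[
\bigl((E_k \otimes F_l)^\dag (E_{k'} \otimes F_{l'})\bigr)_{(k,l),(k',l')}.
\]

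Next I would simplify each entry. Using $(A \otimes B)^\dag = A^\dag \otimes B^\dag$ and the multiplicativity $(A \otimes B)(A' \otimes B') = AA' \otimes BB'$, one gets
\[
(E_k \otimes F_l)^\dag (E_{k'} \otimes F_{l'}) = E_k^\dag E_{k'} \otimes F_l^\dag F_{l'}.
\]
Now because $\varepsilon$ is extreme, Theorem \ref{theorem: characterization of extreme cpt maps} guarantees that $(E_k^\dag E_{k'})_{k,k'}$ is linearly independent in $\End(X)$; likewise $(F_l^\dag F_{l'})_{l,l'}$ is linearly independent in $\End(X')$. A second application of Lemma \ref{lemma: tensor of l.i. vectors}, this time inside $\End(X) \otimes \End(X') \cong \End(X \otimes X')$, yields linear independence of the family $(E_k^\dag E_{k'} \otimes F_l^\dag F_{l'})_{(k,l),(k',l')}$. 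Invoking Theorem \ref{theorem: characterization of extreme cpt maps} once more, this time in the direction that the linear independence of the products implies extremality, finishes the proof.

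I do not expect any genuine obstacle here: the argument is essentially a double use of Lemma \ref{lemma: tensor of l.i. vectors} sandwiching the identity $(E_k \otimes F_l)^\dag (E_{k'} \otimes F_{l'}) = E_k^\dag E_{k'} \otimes F_l^\dag F_{l'}$. The only point that requires a little care is the natural identification of $\End(X) \otimes \End(X')$ with a subspace of $\End(X \otimes X')$, so that linear independence in the former coincides with linear independence of the corresponding operators on $X \otimes X'$; this follows because the map $A \otimes B \mapsto A \otimes B$ (as operators) is the canonical algebra isomorphism and, in particular, is injective.
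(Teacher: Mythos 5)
Your proof is correct and follows essentially the same route as the paper: both hinge on the identity $(E_k \otimes F_l)^\dag (E_{k'} \otimes F_{l'}) = E_k^\dag E_{k'} \otimes F_l^\dag F_{l'}$ together with the fact that tensor products of linearly independent families are linearly independent. The only difference is cosmetic: you invoke Lemma \ref{lemma: tensor of l.i. vectors} directly (correctly noting the canonical identification $\End(X)\otimes\End(X')\cong\End(X\otimes X')$), whereas the paper re-runs the underlying Gram-matrix factorization $G''=G\otimes G'$ explicitly inside the proof.
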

\begin{proof}
	We always use the Hilbert-Schmidt inner product. Let $(E_k)_k$ be linearly independent operation elements for $\varepsilon$, and let $(F_l)_l$ be linearly independent operation elements for $\varepsilon'$. By lemma \ref{lemma: tensor product operation elements}, $(E_k \otimes F_l)_{k,l}$ are operation elements for $\varepsilon\otimes \varepsilon'$. Also, lemma \ref{lemma: tensor of l.i. vectors} implies that $(E_k \otimes F_l)_{k,l}$ are linearly independent. Then, by theorem \ref{theorem: characterization of extreme cpt maps}, $\varepsilon \otimes \varepsilon'$ is extreme if, and only if, $((E_k \otimes F_l)^\dag (E_{k'}\otimes F_{l'}))_{k,k',l,l'}$ is linearly independent (as vectors of $\End(X \otimes Y)$). By lemma \ref{lemma: gram matrix}, linear independence is equivalent to the invertibility of the Gram matrix. To prove the invertibility of the Gram matrix, we'll show that it is the Kronecker product of two invertible matrices. Let $G$, $G'$ and $G''$ be the Gram matrices of $(E_k^\dag E_{k'})_{k,k'}$, $(F_l^\dag F_{l'})_{l,l'}$ and  $((E_k \otimes F_l)^\dag (E_{k'}\otimes F_{l'}))_{k,k',l,l'}$, respectively. Its matrix elements are
	\begin{equation}
		G_{(k,k'),(k'',k''')} = \langle E_k^\dag E_{k'}, E_{k''}^\dag E_{k'''} \rangle,
	\end{equation}
	\begin{equation}
		G'_{(l,l'),(l'',l''')} = \langle F_l^\dag F_{l'}, F_{l''}^\dag F_{l'''} \rangle,
	\end{equation}
	\begin{equation}
		G''_{(k,k',l,l'),(k'',k''',l'',l''')} = \langle (E_k \otimes F_l)^\dag (E_{k'}\otimes F_{l'}), (E_{k''}\otimes F_{l''})^\dag (E_{k'''}\otimes F_{l'''}) \rangle.
	\end{equation}
	Since $\varepsilon$ and $\varepsilon'$ are extreme CPT maps, and since $(E_k)_k$ and $(F_l)_l$ are linearly independent operation elements, by theorem \ref{theorem: characterization of extreme cpt maps} we know that $(E_k^\dag E_{k'})_{k,k'}$ and $(F_l^\dag F_{l'})_{l,l'}$ are both linearly independent. By lemma \ref{lemma: gram matrix}, it implies that $G$ and $G'$ are invertible matrices. We can rewrite $G''$ in terms of $G$ and $G'$ as follows:
	\begin{align*}
		G''_{(k,k',l,l'),(k'',k''',l'',l''')} &= \langle (E_k \otimes F_l)^\dag (E_{k'}\otimes F_{l'}), (E_{k''}\otimes F_{l''})^\dag (E_{k'''}\otimes F_{l'''}) \rangle\\
		&= \langle (E_k^\dag \otimes F_l^\dag)(E_{k'}\otimes F_{l'}), (E_{k''}^\dag\otimes F_{l''}^\dag) (E_{k'''}\otimes F_{l'''}) \rangle\\
		&= \langle E_k^\dag E_{k'} \otimes F_l^\dag F_{l'}, E_{k''}^\dag E_{k'''} \otimes F_{l''}^\dag F_{l'''} \rangle\\
		&\stackrel{(*)}{=} \langle E_k^\dag E_{k'}, E_{k''}^\dag E_{k'''} \rangle \langle F_l^\dag F_{l'}, F_{l''}^\dag F_{l'''} \rangle\\
		&= G_{(k,k'),(k'',k''')} G'_{(l,l'),(l'',l''')}.
	\end{align*}
	In equality (*) we've used equation (\ref{equation: HS multiplicativity}). Given any order of $(k,k')$ and $(l,l')$, we can give the lexicographic order to $(k,k',l,l')$. In this way, the identity $G''_{(k,k',l,l'),(k'',k''',l'',l''')} = G_{(k,k'),(k'',k''')} G'_{(l,l'),(l'',l''')}$ says that $G'' = G \otimes G'$, that is, $G''$ is the Kronecker product of $G$ and $G'$. Since $G$ and $G'$ are invertible, then $G''$ is also invertible, with inverse $G''^{-1} = G^{-1} \otimes G'^{-1}$. Then lemma \ref{lemma: gram matrix} implies that $((E_k \otimes F_l)^\dag (E_{k'} \otimes F_{l'}))_{k,k',l,l'}$ is linearly independent. Therefore, by theorem \ref{theorem: characterization of extreme cpt maps}, $\varepsilon \otimes \varepsilon'$ is an extreme CPT map.
	
\end{proof}

\subsection{Extremality is preserved for UCP maps}

The result for UCP maps can be obtained by using the duality between CPT and UCP maps.

\begin{theorem}
	If $\varepsilon$ is an extreme point of $UCP(X,Y)$ and $\varepsilon'$ is an extreme point of $UCP(X',Y')$, then $\varepsilon \otimes \varepsilon'$ is an extreme point of $UCP(X \otimes X', Y\otimes Y')$.
\end{theorem}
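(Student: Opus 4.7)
The plan is to reduce the UCP statement to the CPT statement (Theorem \ref{theorem: extremality is preserved for cpt}) via the duality provided by Proposition \ref{prop: duality cpt ucp}, which gives an anti-linear bijection $\dag \colon CPT(X,Y) \to UCP(Y,X)$ for any pair of finite dimensional Hilbert spaces. The guiding observation is that this bijection, although anti-linear on the ambient vector space $\Hom(\End(Y),\End(X))$, acts as an affine map on convex combinations: indeed, for any probability weights $p_i \in [0,1]$ and maps $\varphi_i$, we have $(\sum_i p_i \varphi_i)^\dag = \sum_i \overline{p_i} \varphi_i^\dag = \sum_i p_i \varphi_i^\dag$ since each $p_i$ is real. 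Hence $\dag$ carries extreme points to extreme points in both directions.

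First, I would record this preservation of extremality explicitly: $\varphi \in UCP(X,Y)$ is extreme if and only if $\varphi^\dag \in CPT(Y,X)$ is extreme. Applying this to the hypotheses gives that $\varepsilon^\dag$ is extreme in $CPT(Y,X)$ and $\varepsilon'^\dag$ is extreme in $CPT(Y',X')$. Next, Theorem \ref{theorem: extremality is preserved for cpt} yields that $\varepsilon^\dag \otimes \varepsilon'^\dag$ is an extreme point of $CPT(Y\otimes Y', X\otimes X')$.

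Finally, I would invoke Lemma \ref{lemma: dual of tensor of cp maps} to identify $(\varepsilon \otimes \varepsilon')^\dag = \varepsilon^\dag \otimes \varepsilon'^\dag$, so that $(\varepsilon \otimes \varepsilon')^\dag$ is an extreme point of $CPT(Y\otimes Y', X\otimes X')$. Applying the extremality-preservation of $\dag$ one more time (now in the direction $CPT \to UCP$), we conclude that $\varepsilon \otimes \varepsilon'$ is an extreme point of $UCP(X \otimes X', Y\otimes Y')$, as desired. Lemma \ref{lemma: tensor product of ucp} ensures that $\varepsilon \otimes \varepsilon'$ does lie in $UCP(X \otimes X', Y\otimes Y')$ in the first place, so the statement of extremality is meaningful.

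The only step that requires care is the claim that $\dag$ preserves extremality, since $\dag$ is anti-linear rather than linear; the subtlety is resolved by the remark above that anti-linearity is invisible on real convex combinations. Everything else is a direct composition of the cited lemmas and theorem, so I expect no genuine obstacle beyond writing this observation cleanly.
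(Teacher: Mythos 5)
Your proposal is correct and follows essentially the same route as the paper: dualize via Proposition \ref{prop: duality cpt ucp}, apply Theorem \ref{theorem: extremality is preserved for cpt} to $\varepsilon^\dag \otimes \varepsilon'^\dag$, identify this with $(\varepsilon \otimes \varepsilon')^\dag$ using Lemma \ref{lemma: dual of tensor of cp maps}, and dualize back. Your explicit remark that the anti-linear bijection $\dag$ preserves extremality because it acts affinely on real convex combinations is a small clarification the paper takes for granted.
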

\begin{proof}
	Since $\varepsilon \in UCP(X,Y)$ and $\varepsilon' \in UCP(X',Y')$, we know that $\varepsilon^\dag \in CPT(Y,X)$ and $\varepsilon'^\dag \in CPT(Y',X')$. Also, $\varepsilon$ and $\varepsilon'$ are extreme points of $UCP(X,Y)$ and $UCP(X',Y')$, so $\varepsilon^\dag$ and $\varepsilon'^\dag$ are extreme points of $CPT(Y,X)$ and $CPT(Y',X')$. By theorem \ref{theorem: extremality is preserved for cpt}, we know that $\varepsilon^\dag \otimes \varepsilon'^\dag$ is an extreme point of $CPT(Y\otimes Y', X \otimes X')$. By lemma \ref{lemma: dual of tensor of cp maps} we know that $\varepsilon^\dag \otimes \varepsilon'^\dag = (\varepsilon \otimes \varepsilon')^\dag$, so $(\varepsilon \otimes \varepsilon')^\dag$ is extreme. Using again the duality between CPT and UCP maps we conclude that $\varepsilon \otimes \varepsilon'$ is an extreme point of $UCP(X \otimes X', Y \otimes Y')$.
\end{proof}

\subsection{The case of UCPT maps}

The tensor product of extreme UCPT maps may not be extreme. But it is always extreme if one of the two maps is over a Hilbert space of dimension 2. This is a consequence of the next proposition and the fact that, for dimension 2, extreme maps are unitary.

\begin{prop}
	Let $\varepsilon \in UCPT(X)$ and $\mathcal{U} \in UCPT(Y)$ be a unitary map, that is, $\mathcal{U}(A) = U A U^\dag$  for some unitary operator $U \in \End(Y)$. We assume that $\dim Y > 0$. Then $\varepsilon$ is an extreme point of $UCPT(X)$ if, and only if, $\varepsilon\otimes \mathcal{U}$ is an extreme point of $UCPT(X \otimes Y)$.
\label{prop: tensor with unitary}
\end{prop}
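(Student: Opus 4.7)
My plan is to apply the Landau–Streater characterization (Theorem \ref{theorem: characterization of extreme points for ucpt}) directly on both sides, and to exploit the fact that a unitary channel has a single Kraus operator so that tensoring with it essentially ``attaches'' a copy of the identity $\id_Y$ to every relevant product $E_k^\dag E_l$ and $E_l E_k^\dag$.

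First I would fix linearly independent operation elements $(E_k)_k$ for $\varepsilon$ (which exist by the remark following Theorem \ref{theorem: characterization of extreme cpt maps}); since $\mathcal{U}(A) = U A U^\dag$ has the single operation element $U$, Lemma \ref{lemma: tensor product operation elements} gives that $(E_k \otimes U)_k$ are operation elements for $\varepsilon \otimes \mathcal{U}$. Because $U \neq 0$ (it is unitary and $\dim Y > 0$), Lemma \ref{lemma: tensor of l.i. vectors} ensures $(E_k \otimes U)_k$ is linearly independent, so this family is eligible to be fed into Theorem \ref{theorem: characterization of extreme points for ucpt}. Next I would compute, using $U^\dag U = U U^\dag = \id_Y$, that
\begin{equation*}
(E_k \otimes U)^\dag (E_l \otimes U) = E_k^\dag E_l \otimes \id_Y, \qquad (E_l \otimes U)(E_k \otimes U)^\dag = E_l E_k^\dag \otimes \id_Y.
\end{equation*}

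The core observation is then that the linear map
\begin{equation*}
\Phi \colon \End(X) \oplus \End(X) \to \End(X\otimes Y) \oplus \End(X\otimes Y), \quad (A,B) \mapsto (A \otimes \id_Y,\; B \otimes \id_Y)
\end{equation*}
is injective (since $A\mapsto A\otimes \id_Y$ is injective when $\dim Y > 0$, as can be seen from any basis expansion). The computation above says
\begin{equation*}
(E_k \otimes U)^\dag (E_l \otimes U) \oplus (E_l \otimes U)(E_k \otimes U)^\dag = \Phi\bigl(E_k^\dag E_l \oplus E_l E_k^\dag\bigr),
\end{equation*}
so injectivity of $\Phi$ makes the linear independence of the left-hand family (indexed by $(k,l)$) equivalent to the linear independence of $(E_k^\dag E_l \oplus E_l E_k^\dag)_{k,l}$. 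Theorem \ref{theorem: characterization of extreme points for ucpt} then translates this into the desired equivalence between $\varepsilon$ being extreme in $UCPT(X)$ and $\varepsilon \otimes \mathcal{U}$ being extreme in $UCPT(X \otimes Y)$.

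I do not anticipate a serious obstacle: the key algebraic simplification $(E_k \otimes U)^\dag (E_l \otimes U) = E_k^\dag E_l \otimes \id_Y$ is immediate, and injectivity of ``tensoring with $\id_Y$'' is elementary. The only subtle point worth a sentence in the write-up is verifying that the single operation element $U$ really does trivialize via $U^\dag U = U U^\dag = \id_Y$, which is exactly where unitarity is used; if $\mathcal{U}$ were merely a general extreme UCPT endomorphism of $Y$, these products would not collapse to $\id_Y$ and the equivalence would fail, which is consistent with the fact (anticipated in Theorem \ref{theorem: tensor product of ucpt need not be extreme}) that extremality can break for generic UCPT factors.
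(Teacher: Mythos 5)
Your proof is correct. The setup coincides with the paper's: both fix linearly independent operation elements $(E_k)_k$, observe that $(E_k\otimes U)_k$ is a linearly independent family of operation elements for $\varepsilon\otimes\mathcal{U}$, reduce via Theorem \ref{theorem: characterization of extreme points for ucpt}, and use unitarity to collapse $(E_k\otimes U)^\dag(E_l\otimes U)$ to $E_k^\dag E_l\otimes \id_Y$ (and similarly for the other summand). Where you diverge is the final transfer of linear independence: the paper computes the Gram matrices of the two families and shows $G' = \dim Y\, G$, so one is invertible iff the other is, invoking Lemma \ref{lemma: gram matrix}; you instead observe that the family for $\varepsilon\otimes\mathcal{U}$ is the image of the family for $\varepsilon$ under the injective linear map $\Phi(A,B) = (A\otimes\id_Y,\, B\otimes\id_Y)$, and injective linear maps preserve and reflect linear independence. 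Your route is slightly more elementary (it needs no inner product on the target), while the paper's Gram-matrix route is the one it uses uniformly elsewhere and makes the quantitative relation $G'=\dim Y\,G$ explicit; both are complete. As a minor side benefit, your computation also writes the second summand in the correct order $E_l E_k^\dag\otimes UU^\dag$, whereas the paper's displayed intermediate step swaps the tensor factors to $UU^\dag\otimes E_l E_k^\dag$ (a harmless slip there, since the subsequent inner products are unaffected).
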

\begin{proof}
	Let $(E_k)_k$ be linearly independent operation elements for $\varepsilon$. $\mathcal{U}$ has already $(U)$ as linearly independent operation elements, since there is only one operator. By lemmas \ref{lemma: tensor of l.i. vectors} and \ref{lemma: tensor product operation elements}, $\varepsilon \otimes \mathcal{U}$ has linearly independent operation elements $(E_k \otimes U)$. By theorem \ref{theorem: characterization of extreme points for ucpt}, $\varepsilon \otimes \mathcal{U}$ is an extreme point of $UCPT(X\otimes Y)$ if, and only if, $((E_k \otimes U)^\dag (E_l \otimes U) \oplus (E_l \otimes U)(E_k \otimes U)^\dag)_{k,l}$ is linearly independent. We have that
	\begin{align*}
	 (E_k \otimes U)^\dag (E_l \otimes U) \oplus (E_l \otimes U)(E_k \otimes U)^\dag &= (E_k^\dag \otimes U^\dag) (E_l \otimes U) \oplus (E_l \otimes U)\\
	 &\quad \; (E_k^\dag \otimes U^\dag)\\
	 &= (E_k^\dag E_l \otimes U^\dag U) \oplus (U U^\dag \otimes E_l E_k^\dag)\\
	 &= (E_k^\dag E_l \otimes id_Y) \oplus (id_Y \otimes E_l E_k^\dag).
	\end{align*}
	By lemma \ref{lemma: gram matrix}, we can check  $((E_k^\dag E_l \otimes id_Y) \oplus (id_Y \otimes E_l E_k^\dag))_{k,l}$ is linearly independent or not by analysing its Gram matrix. Let $G$ be the Gram matrix of $(E_k^\dag E_l \oplus E_l E_k^\dag)_{k,l}$ and $G'$ be the Gram matrix of $((E_k^\dag E_l \otimes id_Y) \oplus (id_Y \otimes E_l E_k^\dag))_{k,l}$. The matrix elements of $G$ are
	\begin{align*}
		G_{(k,l),(k',l')} &= \langle E_k^\dag E_l \oplus E_l E_k^\dag, E_{k'}^\dag E_{l'} \oplus E_{l'} E_{k'}^\dag \rangle\\
		&= \langle E_k^\dag E_l, E_{k'}^\dag E_{l'} \rangle + \langle E_l E_k^\dag, E_{l'} E_{k'}^\dag \rangle.
	\end{align*}
	The matrix elements of $G'$ are
	\begin{align*}
		G'_{(k,l),(k',l')} &= \langle (E_k^\dag E_l \otimes id_Y) \oplus (id_Y \otimes E_l E_k^\dag), (E_{k'}^\dag E_{l'} \otimes id_Y) \oplus (id_Y \otimes E_{l'} E_{k'}^\dag) \rangle\\
		&= \langle E_k^\dag E_l \otimes id_Y, E_{k'}^\dag E_{l'} \otimes id_Y \rangle + \langle id_Y \otimes E_l E_k^\dag, id_Y \otimes E_{l'} E_{k'}^\dag \rangle\\
		&= \langle E_k^\dag E_l, E_{k'}^\dag E_{l'} \rangle \langle id_Y, id_Y\rangle + \langle id_Y, id_Y\rangle \langle E_l E_k^\dag, E_{l'} E_{k'}^\dag \rangle \\
		&= \dim Y (\langle E_k^\dag E_l, E_{k'}^\dag E_{l'} \rangle + \langle E_l E_k^\dag, E_{l'} E_{k'}^\dag \rangle)\\
		& = \dim Y \; G_{(k,l),(k',l')}.
	\end{align*}
	That is, $G' = \dim Y \; G$. Therefore $G$ is invertible if, and only if, $G'$ is invertible. By lemma \ref{lemma: gram matrix}, this says that $(E_k^\dag E_l \oplus E_l E_k^\dag)_{k,l}$ is linearly independent if, and only if, $((E_k^\dag E_l \otimes id_Y) \oplus (id_Y \otimes E_l E_k^\dag))_{k,l}$ is linearly independent. Then, by theorem \ref{theorem: characterization of extreme points for ucpt}, $\varepsilon$ is an extreme point of $UCPT(X)$ if, and only if, $\varepsilon \otimes \mathcal{U}$ is an extreme point of $UCPT(X \otimes Y)$.
\end{proof}

By theorem 4.23 of \cite{watrous_2018}, for $\dim X = 2$, the extreme points of $UCPT(X)$ are the unitary maps. Combining this result with proposition \ref{prop: tensor with unitary} we obtain the next theorem.

\begin{theorem}
	Let $\varepsilon$ be an extreme point of $UCPT(X)$ and $\varepsilon'$ be an extreme point of $UCPT(Y)$. If $\dim X = 2$ or $\dim Y = 2$, then $\varepsilon \otimes \varepsilon'$ is an extreme point of $UCPT(X \otimes Y)$.
\end{theorem}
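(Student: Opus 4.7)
The plan is to combine the two ingredients that the paper has just set up: theorem 4.23 of \cite{watrous_2018}, which says that every extreme point of $UCPT(Z)$ for $\dim Z = 2$ is a unitary map, and proposition \ref{prop: tensor with unitary}, which says that tensoring with a unitary preserves extremality. Since the hypothesis is symmetric (``$\dim X = 2$ or $\dim Y = 2$''), I would first dispose of the case $\dim Y = 2$ and then reduce the other case to it.

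In the case $\dim Y = 2$, theorem 4.23 of \cite{watrous_2018} applied to $\varepsilon'$ says that $\varepsilon'$ is a unitary map $\mathcal{U}(B) = UBU^\dag$ on $Y$. Proposition \ref{prop: tensor with unitary} then applies verbatim with $\varepsilon$ and $\mathcal{U} = \varepsilon'$: since $\varepsilon$ is extreme in $UCPT(X)$ and $\dim Y = 2 > 0$, it concludes that $\varepsilon \otimes \varepsilon'$ is extreme in $UCPT(X \otimes Y)$.

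In the case $\dim X = 2$, proposition \ref{prop: tensor with unitary} cannot be applied directly because the unitary sits on the first tensor factor rather than the second. The cleanest way I see to handle this is to use the swap isomorphism $\sigma \colon X \otimes Y \to Y \otimes X$, which is unitary. Conjugation by $\sigma$ induces a linear bijection $UCPT(X \otimes Y) \to UCPT(Y \otimes X)$ that sends $\varepsilon \otimes \varepsilon'$ to $\varepsilon' \otimes \varepsilon$; being an affine isomorphism of convex sets, it preserves extreme points. So $\varepsilon \otimes \varepsilon'$ is extreme in $UCPT(X \otimes Y)$ if and only if $\varepsilon' \otimes \varepsilon$ is extreme in $UCPT(Y \otimes X)$, and the latter falls under the previous case since now the two-dimensional factor $X$ appears on the right. (Alternatively, one can inspect the proof of proposition \ref{prop: tensor with unitary} and observe that placing the unitary on the first factor only replaces ``$E_k^\dag E_l \otimes id_Y$'' by ``$id_Y \otimes E_k^\dag E_l$'' and ``$id_Y \otimes E_l E_k^\dag$'' by ``$E_l E_k^\dag \otimes id_Y$'', so the same Gram-matrix argument gives $G' = \dim Y \cdot G$.)

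The main obstacle is a cosmetic one: the asymmetry in the statement of proposition \ref{prop: tensor with unitary}, which forces either the swap reduction or a mirror version of that proposition. Neither step is hard; the substantive content of the theorem lies in proposition \ref{prop: tensor with unitary} and theorem 4.23 of \cite{watrous_2018}, which have already been established or cited, so what remains is essentially a packaging of those facts.
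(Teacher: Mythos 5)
Your proposal is correct and is essentially the paper's own argument: the paper likewise obtains this theorem by combining theorem 4.23 of \cite{watrous_2018} with proposition \ref{prop: tensor with unitary}, leaving the details implicit. Your explicit treatment of the case $\dim X = 2$ via the swap isomorphism (or the mirrored Gram-matrix computation) correctly fills in the one detail the paper glosses over.
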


For higher dimensions the extremality may not be preserved. One reason for this can be seen if we analyze theorem \ref{theorem: characterization of extreme points for ucpt}. Given $\varepsilon \in UCPT(X)$ with $CR(\varepsilon)$ linearly independent operation elements $(E_k)_k$, it is an extreme point of the UCPT maps if $(E_k^\dag E_l \oplus E_l E_k^\dag)_{k,l}$ is linearly independent as vectors of $\End(X)\oplus \End(X)$. The sequence $(E_k^\dag E_l \oplus E_l E_k^\dag)_{k,l}$ has $CR(\varepsilon)^2$ elements. For it to be linearly independent, $CR(\varepsilon)^2$ must be at most the dimension of $\End(X) \oplus \End(X)$, which is $2 (\dim X)^2$. Therefore, if $\varepsilon$ is extreme then $CR(\varepsilon) \leq \sqrt{2} \dim X$. Actually, a slightly better upper bound is given in \cite{ohno}, which says that $CR(\varepsilon) \leq \sqrt{2 (\dim X)^2-1}$. If we take two extreme UCPT maps of high enough Choi ranks, their tensor product can have a Choi rank that exceeds the previous upper bound, so the tensor product can't be extreme. This is proven in the next theorem.

\begin{theorem}
	Let $\varepsilon$ be an extreme point of $UCPT(X)$ and $\varepsilon'$ be an extreme point of $UCPT(Y)$. In particular, their Choi ranks satisfy $CR(\varepsilon) \leq \sqrt{2} \dim X$ and $CR(\varepsilon') \leq \sqrt{2} \dim Y$. If they further satisfy that $\sqrt[4]{2} \dim X < CR(\varepsilon)$ and $\sqrt[4]{2} \dim Y < CR(\varepsilon')$, then $\sqrt{2} \dim (X\otimes Y) < CR(\varepsilon \otimes \varepsilon')$, so $\varepsilon \otimes \varepsilon'$ isn't an extreme point of $UCPT(X \otimes Y)$.
\label{theorem: tensor product of ucpt need not be extreme}
\end{theorem}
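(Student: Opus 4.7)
The plan is to combine the multiplicativity of the Choi rank under the tensor product (Corollary \ref{corollary: CR of tensor product}) with the upper bound $CR(\eta) \leq \sqrt{2} \dim Z$ for any extreme $\eta \in UCPT(Z)$ (mentioned in the paragraph preceding the theorem), and derive a contradiction from the assumed lower bounds on the Choi ranks of $\varepsilon$ and $\varepsilon'$.

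First I would recall the upper bound on the Choi rank of an extreme UCPT map. By Theorem \ref{theorem: characterization of extreme points for ucpt}, if $\eta \in UCPT(Z)$ is extreme with linearly independent operation elements $(G_k)_{k=1,\dots,CR(\eta)}$, then $(G_k^\dag G_l \oplus G_l G_k^\dag)_{k,l}$ is a linearly independent family of $CR(\eta)^2$ vectors inside $\End(Z)\oplus \End(Z)$, a space of complex dimension $2(\dim Z)^2$. Hence $CR(\eta)^2 \leq 2 (\dim Z)^2$, i.e., $CR(\eta) \leq \sqrt{2}\,\dim Z$. Applied to $\varepsilon$ and $\varepsilon'$ this yields the ``in particular'' clause of the statement.

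Next I would multiply the two strict lower bound hypotheses: from $\sqrt[4]{2}\,\dim X < CR(\varepsilon)$ and $\sqrt[4]{2}\,\dim Y < CR(\varepsilon')$, taking the product gives
\begin{equation*}
\sqrt{2}\,\dim X \,\dim Y < CR(\varepsilon)\,CR(\varepsilon').
\end{equation*}
By Corollary \ref{corollary: CR of tensor product} we have $CR(\varepsilon \otimes \varepsilon') = CR(\varepsilon)\,CR(\varepsilon')$, and since $\dim(X\otimes Y) = \dim X \cdot \dim Y$, this rearranges to
\begin{equation*}
\sqrt{2}\,\dim(X \otimes Y) < CR(\varepsilon \otimes \varepsilon').
\end{equation*}

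Finally, by Lemma \ref{lemma: tensor product of ucpt}, $\varepsilon \otimes \varepsilon' \in UCPT(X\otimes Y)$. If it were an extreme point of $UCPT(X\otimes Y)$, the upper bound from the first step (applied to $Z = X \otimes Y$) would force $CR(\varepsilon \otimes \varepsilon') \leq \sqrt{2}\,\dim(X \otimes Y)$, contradicting the strict inequality just derived. Hence $\varepsilon \otimes \varepsilon'$ cannot be extreme in $UCPT(X \otimes Y)$. There is no real obstacle here; the argument is purely arithmetic once the Choi-rank upper bound for extreme UCPT maps and the multiplicativity corollary are in hand, and the only subtle point is keeping track that the $\sqrt[4]{2}$ factors are tuned precisely so that their product produces the forbidden bound $\sqrt{2}\,\dim(X\otimes Y)$.
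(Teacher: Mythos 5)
Your proposal is correct and follows essentially the same route as the paper: derive the bound $CR(\eta) \leq \sqrt{2}\,\dim Z$ for extreme UCPT maps from the dimension count in Theorem \ref{theorem: characterization of extreme points for ucpt}, multiply the two strict lower bounds, and invoke Corollary \ref{corollary: CR of tensor product} to contradict that bound for $\varepsilon \otimes \varepsilon'$. Your write-up is if anything slightly more self-contained, since the paper relegates the dimension-count step to the preceding discussion rather than the proof itself.
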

\begin{proof}
	By corollary \ref{corollary: CR of tensor product}, we know that $CR(\varepsilon \otimes \varepsilon') = CR(\varepsilon)$ $CR(\varepsilon')$. If $\sqrt[4]{2} \dim X $ $<$ $ CR(\varepsilon)$ and $\sqrt[4]{2} \dim Y < CR(\varepsilon')$, multiplying both inequalities we get \linebreak $\sqrt[4]{2}\dim X \sqrt[4]{2} \dim Y < CR(\varepsilon)CR(\varepsilon')$, so $\sqrt{2} \dim (X \otimes Y) < CR(\varepsilon \otimes \varepsilon')$. As seen previously, this implies that $\varepsilon \otimes \varepsilon'$ isn't an extreme point of $UCPT(X \otimes Y)$.
\end{proof}

By theorem \ref{theorem: tensor product of ucpt need not be extreme}, to prove that the tensor product of extreme UCPT maps may not be extreme, it is sufficient to find extreme maps of high enough Choi rank. According to theorems 2.5 and 2.6 of \cite{ohno}, there is an extreme map $ \varepsilon_3 \in UCPT(\complexNumbers^3)$ with $CR(\varepsilon_3) = 4$ and an extreme map of $\varepsilon_4 \in UCPT(\complexNumbers^4)$ with $CR(\varepsilon_4) = 5$. More precisely, the operation elements (in the standard basis) for $\varepsilon_3$ are
\begin{equation}
	E_0 = 	\frac{1}{2}	\begin{pmatrix}
							1 & 0 & 0 \\
							0 & 0 & 0 \\
							0 & 0 & 0
						\end{pmatrix},
\end{equation}
\begin{equation}
E_1 = 	\frac{1}{2}	\begin{pmatrix}
	0 & 0 & 0 \\
	1 & 0 & 0 \\
	0 & \sqrt{2} & 0
\end{pmatrix},
\end{equation}
\begin{equation}
E_2 = 	\frac{1}{2}	\begin{pmatrix}
	0 & \sqrt{2} & 0 \\
	0 & 0 & \sqrt{3} \\
	0 & 0 & 0
\end{pmatrix},
\end{equation}
\begin{equation}
E_3 = 	\frac{1}{2}	\begin{pmatrix}
	0 & 0 & 1 \\
	0 & 0 & 0 \\
	\sqrt{2} & 0 & 0
\end{pmatrix}.
\end{equation}
The operation elements for $\varepsilon_4$ are
\begin{equation}
	F_0 = 	\frac{1}{2}	\begin{pmatrix}
		0 & 0 & 0 & 0\\
		0 & 0 & 1 & 0\\
		1 & 0 & 0 & 0\\
		0 & 0 & 0 & 0
	\end{pmatrix},
\end{equation}
\begin{equation}
F_1 = 	\frac{1}{2}	\begin{pmatrix}
	0 & 0 & 0 & 0\\
	0 & 0 & 0 & 0\\
	0 & 0 & 0 & \sqrt{2}\\
	0 & \sqrt{2} & 0 & 0
\end{pmatrix},
\end{equation}
\begin{equation}
F_2 = 	\frac{1}{2}	\begin{pmatrix}
	0 & 0 & \sqrt{3} & 0\\
	0 & 0 & 0 & 0\\
	0 & 0 & 0 & 0\\
	\sqrt{2} & 0 & 0 & 0
\end{pmatrix},
\end{equation}
\begin{equation}
F_3 = 	\frac{1}{2}	\begin{pmatrix}
	0 & 1 & 0 & 0\\
	0 & 0 & 0 & \sqrt{2}\\
	0 & 0 & 0 & 0\\
	0 & 0 & 0 & 0
\end{pmatrix},
\end{equation}
\begin{equation}
F_4 = 	\frac{1}{2}	\begin{pmatrix}
	0 & 0 & 0 & 0\\
	1 & 0 & 0 & 0\\
	0 & 1 & 0 & 0\\
	0 & 0 & 0 & 0
\end{pmatrix}.
\end{equation}
Notice that the above matrices are the complex conjugate of the ones defined in \cite{ohno}. This occurs because in \cite{ohno} a UCPT map is written as $\varepsilon(A) = \sum_k E_k^\dag A E_k$, while in this article the convention is to write $\varepsilon(A) = \sum_k E_k A E_k^\dag$, which is the same convention of \cite{nielsen_chuang_2010,watrous_2018}.

Both $\varepsilon_3$ and $\varepsilon_4$ have maximal Choi ranks. For both $n=3$ and $n=4$ we have $CR(\varepsilon_n) > \sqrt[4]{2} \, n$. Then, by theorem \ref{theorem: tensor product of ucpt need not be extreme}, we know that $\varepsilon_n \otimes \varepsilon_m$ is not an extreme UCPT map, for $n,m \in \{3,4\}$. It remains to show if we can apply theorem \ref{theorem: tensor product of ucpt need not be extreme} to obtain counterexamples for even higher dimensions.

\bibliographystyle{alpha}
\bibliography{on_the_extremality_of_the_tensor_product_of_quantum_channels}

\end{document}